\newtheorem{theorem}{Theorem}
\newtheorem{lemma}{\textbf{Lemma}}[section]
\newtheorem{remark}{\textbf{Remark}}[section]
\newtheorem{corollary}{\textbf{Corollary}}[section]
\newcommand{\tabincell}[2]{\begin{tabular}{@{}#1@{}}#2\end{tabular}}
\newcommand{\F}{\mathbb{F}}
\begin{document}

\baselineskip 17pt
\title{\Large\bf Constructions of MDS Self-dual Codes from Short Length}

\author{\large Derong Xie \quad\quad Xiaolei Fang \quad\quad Jinquan Luo*}\footnotetext{The authors are with School of Mathematics
and Statistics \& Hubei Key Laboratory of Mathematical Sciences, Central China Normal University, Wuhan China 430079.\\
 E-mail: derongxie@yahoo.com(D.Xie), fangxiaolei@mails.ccnu.edu.cn(X.Fang), luojinquan@mail.ccnu.edu.cn(J.Luo)}

\date{}
\maketitle

{\bf Abstract}: Systematic constructions of MDS self-dual codes is widely concerned. In this paper, we consider the constructions of MDS Euclidean self-dual codes from short length. Indeed, the exact constructions of MDS Euclidean self-dual codes from short length ($n=3,4,5,6$) are given. In general, we construct more new of $q$-ary MDS Euclidean self-dual codes from MDS self-dual codes of known length via generalized Reed-Solomon (GRS for short) codes and extended GRS codes.

{\bf Key words}: MDS code,  self-dual code, Generalized Reed-Solomon code, Extended Generalized Reed-Solomon code.

\section{Introduction}

Let $\mathbb{F}_{q}$ be the finite field of cardinality $q$ where $q$ is a power of some odd prime $p$. An $[n,k,d]_q$ linear code $\mathcal{C}$ over $\F_q$ is a
linear $\F_q$-subspace of $\F_q^n$ with dimension $k$ and minimal (Hamming) distance $d$.  The Singleton bound states that $n\geq k+d-1$. The code $\mathcal{C}$ is called maximum distance separable(MDS) if the parameters can reach the Singleton bound.  The set
$$\mathcal{C}^{\bot}=\left\{x\in\mathbb{F}_{q}^{n} \ \Big| \ (x,c)=0 \ \mbox{for \  all} \ c\in\mathcal{C}\right\}$$
is called the dual code of $\mathcal{C}$ where $(x,c)$ is the usual  inner product in $\mathbb{F}_{q}^{n}$. If $\mathcal{C}\subseteq \mathcal{C}^{\bot}$, then $\mathcal{C}$ is called  self-orthogonal. If $\mathcal{C}=\mathcal{C}^{\bot}$, then $\mathcal{C}$ is called  self-dual.

MDS  self-dual codes attract much attention since it has good algebraic structure and optimal parameters. MDS  self-dual codes of length $n$ over $\mathbb{F}_{2^{m}}$ have been completely determined  in \cite{GG}. Furthermore, over finite field of odd prime characteristic, MDS  self-dual codes  are constructed via different techniques, (1). orthogonal designs, see (\cite{GK}, \cite{HK1}, \cite{HK2}); (2). building up technique, see (\cite{KL1}, \cite{KL2}); (3). constacyclic codes, see (\cite{KZT}, \cite{TW}, \cite{YC}); (4). (extended) GRS codes, see (\cite{FF3}, \cite{GKL}, \cite{JX2}, \cite{LLL}, \cite{TW}, \cite{Yan}, \cite{ZF}). We list all the known results on the systematic constructions of MDS  self-dual codes, which are depicted in Table 1.

\begin{center}
\begin{longtable}{|c|c|c|}  
\caption{Some known systematic construction on MDS  self-dual codes of length
$n$\\ ($\eta$ is the quadratic character of $\mathbb{F}_{q}$) } \\ \hline
$q$ & $n$ even & Reference\\  \hline
$q$ even  &  $n \leq q$   & \cite{GG} \\ \hline
$q$ odd & $n=q+1$ & \cite{GG}\\ \hline
$q$ odd & $(n-1)|(q -1)$, $\eta(1 - n) = 1$ &   \cite{Yan} \\ \hline
$q$ odd & $(n-2)|(q - 1)$, $\eta(2 - n) = 1$ &   \cite{Yan}\\ \hline
$q = r^{s}$ , $r$ odd, $s\geq 2$ & $n = lr$,  $l$ even and $2l|(r - 1)$ &   \cite{Yan} \\ \hline

$q = r^{s}$ , $r$ odd, $s \geq 2$ & $n = lr$,  $l$ even , $(l - 1)|(r - 1)$ and $\eta(1 - l)=1$ &   \cite{Yan} \\ \hline

$q = r^{s}$ , $r$ odd, $s \geq 2$ & $n = lr + 1$, $l$ odd , $l|(r - 1)$ and $\eta(l) = 1$  &   \cite{Yan} \\ \hline
 $q = r^{s}$ , $r$ odd, $s \geq 2$ & $n = lr + 1$, $l$ odd , $(l - 1)|(r - 1)$ and $\eta(l - 1) = \eta(-1) = 1$ &  \cite{Yan} \\ \hline

$q=r^2$  & $n \leq r$  & \cite{JX2} \\ \hline
$q = r^2, r\equiv3\,(\mathrm{mod}\,4)$  &  $n= 2tr$ for any $t \leq \frac{r - 1}{2}$ &   \cite{JX2}\\ \hline

$q = r^2$, $r$ odd & $n = tr$, $t$ even and $1 \leq t \leq r$ &   \cite{Yan} \\ \hline

 $q = r^2$, $r$ odd & $n = tr + 1$,  $t$ odd and $1 \leq t \leq r$ &   \cite{Yan} \\ \hline

$q \equiv1\,(\mathrm{mod}\,4)$ &  $ n|(q - 1), n < q - 1$ &   \cite{Yan}\\ \hline
$q\equiv1\,(\mathrm{mod}\,4)$ &  $4^{n}\cdot n^{2} \leq q$ &  \cite{JX2}\\ \hline

  $q = p^k $, odd prime $p$ & $n= p^r+1$, $r|k$ &   \cite{Yan} \\ \hline
$q = p^k $, odd prime $p$ & $n= 2p^e$, $1 \leq e < k$, $\eta(-1) = 1$&  \cite{Yan} \\ \hline
$q=r^2$, $r$ odd & $n=tm$, $1\leq t \leq \frac{r-1}{\gcd(r-1,m)}$, $\frac{q-1}{m}$ even &  \cite{LLL} \\ \hline
$q=r^2$, $r$ odd & $n=tm+1$, $tm$ odd, $1\leq t \leq \frac{r-1}{\gcd(r-1,m)}$ and $m|(q-1)$  & \cite{LLL}\\ \hline
$q=r^2$, $r$ odd & $n=tm+2$, $tm$ even, $1\leq t \leq \frac{r-1}{\gcd(r-1,m)}$ and $m|(q-1)$   &   \cite{LLL}\\\hline
$q=r^2$, $r$ odd & $n=tm$, $1\leq t \leq \frac{r+1}{\gcd(r+1,m)}$, $\frac{q-1}{m}$ even & \cite{FLLL} \\ \hline

$q=r^2$, $r$ odd  &\tabincell{c}{$n=tm+2$, $tm$ even(except when $t$ is even, $m$ is even\\
 and $r\equiv1\,(\mathrm{mod}\,4)$), $1\leq t \leq \frac{r+1}{\gcd(r+1,m)}$ and $m|(q-1)$}   &  \cite{FLLL}  \\\hline
 $q=r^2$, $r$ odd & $n=tm+1$, $tm$ odd, $2\leq t \leq \frac{r+1}{2\gcd(r+1,m)}$ and $m|(q-1)$  & \cite{FLLL} \\ \hline
   $q=r^2$, $r$ odd & \tabincell{c}{$n=tm$, $1\leq t \leq \frac{s(r-1)}{\gcd(s(r-1),m)}$, $s$ even, $s|m$,\\ $\frac{r+1}{s}$ even and $\frac{q-1}{m}$ even}  & \cite{FLLL} \\ \hline
      $q=r^2$, $r$ odd & \tabincell{c}{$n=tm+2$, $1\leq t \leq \frac{s(r-1)}{\gcd(s(r-1),m)}$, $s$ even, $s|m$,\\ $s\mid r+1$ and $m|(q-1)$}  & \cite{FLLL} \\ \hline
$q=p^{m}$, $m$ even, odd prime $p$ & $n=2tr^l$ with $r=p^s$, $s\mid\frac{m}{2}$, $0\leq l\leq \frac{m}{s}$ and $1\leq t\leq\frac{r-1}{2}$ & \cite{FF3}\\ \hline
$q=p^{m}$, $m$ even, odd prime $p$ &\tabincell{c}{$n=(2t+1)r^l+1$ with $r=p^s$, $s\mid\frac{m}{2}$, $0\leq l<\frac{m}{s}$ \\and $0\leq t\leq\frac{r-1}{2}$ or $l=\frac{m}{s}$, $t=0$} & \cite{FF3}\\ \hline
$q=p^m\equiv1\,(\mathrm{mod}\,4)$ & $n= p^l+1$ with $0\leq l\leq m$ &   \cite{FF3} \\ \hline
$q=r^2$, $r\equiv1\,(\mathrm{mod}\,4)$ and $s$ even & $n=s(r-1)+t(r+1)$ with $1\leq s\leq \frac{r+1}{2}$ and $1\leq t\leq \frac{r-1}{2}$ &   \cite{FLL} \\ \hline
$q=r^2$, $r\equiv3\,(\mathrm{mod}\,4)$ and $s$ odd & $n=s(r-1)+t(r+1)$ with $1\leq s\leq \frac{r+1}{2}$ and $1\leq t\leq \frac{r-1}{2}$ &   \cite{FLL} \\ \hline
\end{longtable}
 \end{center}

In this paper, we focus on constructions of MDS self-dual codes from short length. In Section 2, we introduce some basic definitions, notations and useful results. In Section 3, we show that new classes $q$-ary MDS self-dual codes is constructed via GRS codes and extended GRS coeds. Based on the union of affine subspaces, we give constructions of MDS self-dual codes of length $2tp^{l}$ (resp. $(2t+1)p^{l})+1$) from MDS self-dual GRS codes of length $2t$ (resp. MDS self-dual EGRS codes of length $2t+2$). Based on the union of cosets of some multiplicative subgroup, we give constructions of MDS self-dual codes of length $2te_{1}$ (resp. $(2t+1)e_{1}+1$) from MDS  self-dual codes of length $2t$ (resp. MDS self-dual EGRS codes of length $2t+2$) under certain conditions. Precisely, our main contribution is to construct new MDS self-dual codes of length $n$ as follows.
\begin{itemize}
  \item[] Let $q=p^{m}$ be an odd prime power, $s\mid m$ and $q-1=e_{1}e_{2}$.
  \item[(1)] $n=4p^{l}$ with $0\leq l < m$, if $p\equiv1 \ (\mbox{mod} \ 12)$ (see Theorem 1 (1)).
  \item[(2)] $n=6p^{l}$ with $0\leq l < m$, if $p\equiv1,9 \ (\mbox{mod} \ 40)$ (see Theorem 1 (2)).
  \item[(3)] $n=3p^{l}+1$ with $0\leq l < m$ if $p\equiv1,3 \ (\mbox{mod} \ 8)$ (see Theorem 2).
  \item[(4)] $n=5p^{l}+1$ with $0\leq l < m$ if $p\equiv1 \ (\mbox{mod} \ 24)$ (see Corollary 3.1).
  \item[(5)] $n=4e_{1}$, if $p\equiv1,3 \ (\mbox{mod} \ 8)$, $e_{1}$ odd and $e_{2}\geq4$ (see Theorem 4).
  \item[(6)] $n=(t+1)p^{sl}$ with $0\leq l < \frac{m}{s}$, if $t$ is odd, $t\mid (p^{s} -1)$ and $\eta(-t)= 1$ (see Theorem 5).
  \item[(7)] $n=(t+1)p^{sl}+1$ with $0\leq l < \frac{m}{s}$, if $t$ is even, $t\mid (p^{s} -1)$ and $\eta(t)=\eta(-1)= 1$ (see Theorem 5).
\end{itemize}
 Finally, we give a short summary of this paper in Section 4.

\section{Preliminaries}

For $1\leq n\leq q$, we choose $S=\{a_{1},a_{2},\ldots,a_{n}\}$ and $V=(v_{1},v_{2},\ldots,v_{n})$, where $a_{i}\in\mathbb{F}_{q}$ are distinct elements  and
$v_{i}\in\mathbb{F}_{q}^{*}$ ($v_{i}$ may not be distinct)  for $1\leq i\leq n$.
Then the generalized Reed-Solomom (GRS for short) code of length $n$ associated with $S$ and $V$ is
\begin{equation}\label{def GRS}
\mathbf{GRS}_{k}(S,V,q)=\left\{(v_{1}f(a_{1}),\ldots,v_{n}f(a_{n})):f(x)\in\mathbb{F}_{q}[x],\mathrm{deg}(f(x))\leq k-1\right\},
\end{equation}
for $1\leq k\leq n$.

Moreover, the extended generalized Reed-Solomom (EGRS for short) code associated with $S$ and $V$ is defined by:
\begin{equation}\label{def extended GRS}
\mathbf{EGRS}_{k}(S,V,q)=\left\{(v_{1}f(a_{1}),\ldots,v_{n}f(a_{n}),f_{k-1}):f(x)\in\mathbb{F}_{q}[x],
\mathrm{deg}(f(x))\leq k-1\right\},
\end{equation}
where $1\leq k\leq n$ and $f_{k-1}$ is the coefficient of $x^{k-1}$ in $f(x)$.

It is well known that $\mathbf{GRS}_{k}(S,V,q)$ and $\mathbf{EGRS}_{k}(S,V,q)$ are MDS code and their duals are also
MDS \cite{MS}.

Denote by 
\begin{equation*}
\Delta_{S}(a_{i})=\prod_{1\leq j\leq n,j\neq i}(a_{i}-a_{j}), \quad
f_{S}(x)=\prod_{1\leq j\leq n}(x-a_{j})
\end{equation*}
and $\eta$ the quadratic (multiplicative) character of $\mathbb{F}_{q}$ throughout this paper.
We give the following lemmas, which are useful in the proof of the main results.
\begin{lemma}\label{22}(\cite{ZF})
(1) Let $S=\{a_{1},a_{2},\cdots,a_{n}\}$ be a subset of $\mathbb{F}_{q}$ and $f_{S}(x)=\prod_{a\in S}(x-a)$. Then for any $a\in S$, $\Delta_{S}(a)=f^{'}_{S}(a)$.

(2) Let $S_{1}$ and $S_{2}$ be disjoint subsets of $\mathbb{F}_{q}$,  $S=S_{1}\cup S_{2}$. Then for $b\in S$,
\begin{equation*}
\Delta_{S}(b)=
\begin{cases}
\Delta_{S_{1}}(b)f_{S_{2}}(b), \text{ if $b\in S_{1}$,}\\
\Delta_{S_{2}}(b)f_{S_{1}}(b), \text{ if $b\in S_{2}$.}
\end{cases}
\end{equation*}

\end{lemma}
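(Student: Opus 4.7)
The plan is to treat each part as a direct computation from the definitions, since both statements are essentially book-keeping identities about products indexed by a finite set.

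For part (1), I would write $f_S(x)=\prod_{j=1}^{n}(x-a_j)$ and apply the Leibniz product rule to obtain
\[
f'_S(x)=\sum_{i=1}^{n}\prod_{\substack{1\le j\le n\\ j\neq i}}(x-a_j).
\]
Specializing to $x=a_i$, every summand with outer index $\neq i$ still contains the factor $(a_i-a_i)=0$ and hence vanishes, leaving only the $i$-th term $\prod_{j\neq i}(a_i-a_j)$, which by definition is $\Delta_S(a_i)$. This gives $f'_S(a_i)=\Delta_S(a_i)$ for every $a_i\in S$.

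For part (2), I would start from the defining product $\Delta_S(b)=\prod_{a\in S,\, a\neq b}(b-a)$ and use the fact that the disjointness $S=S_1\sqcup S_2$ partitions the index set $\{a\in S:a\neq b\}$ into two pieces. If $b\in S_1$, then $b\notin S_2$, so the restriction $a\neq b$ is vacuous on $S_2$, and the product factors as
\[
\Delta_S(b)=\Bigl(\prod_{a\in S_1,\, a\neq b}(b-a)\Bigr)\Bigl(\prod_{a\in S_2}(b-a)\Bigr)=\Delta_{S_1}(b)\,f_{S_2}(b).
\]
The case $b\in S_2$ is completely symmetric, swapping the roles of $S_1$ and $S_2$.

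There is essentially no real obstacle: the only point to be careful about is the asymmetry in whether the exclusion ``$a\neq b$'' is active, which depends on which of the two subsets contains $b$. Both identities are elementary, and I would expect the author's proof to consist of almost exactly these one-line derivations, since the lemma is intended as a computational lemma that will be invoked repeatedly in the later GRS/EGRS self-duality arguments.
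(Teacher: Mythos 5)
Your proof is correct; the paper itself gives no proof of this lemma (it is quoted from \cite{ZF}), and your two computations --- the Leibniz-rule evaluation for part (1) and the partition of the product over $S_1\sqcup S_2$ for part (2), with the correct observation that the exclusion $a\neq b$ is vacuous on the subset not containing $b$ --- are exactly the standard argument one would supply.
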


\begin{lemma}\label{fS}(\cite{ZF}) Let $a_{1},a_{2},\cdots,a_{n}$ be distinct elements in $\mathbb{F}_{q}$, $S=\{a_{1},a_{2},\cdots,a_{n}\}$. 
\begin{itemize}
  \item[(1)] (\cite{JX2}) Suppose that $n$ is even. There exists $V=(v_{1},v_{2},\cdots,v_{n})\in(\mathbb{F}_{q}^{*})^{n}$  such that  code $\mathbf{GRS}_{\frac{n}{2}}(S,V,q)$ is self-dual if and only if all $\eta(\Delta_{S}(a))$ are the same.
  \item[(2)] (\cite{Yan}) Suppose that $n$ is odd. There exists $V=(v_{1},v_{2},\cdots,v_{n})\in(\mathbb{F}_{q}^{*})^{n}$  such that code $\mathbf{EGRS}_{\frac{n+1}{2}}(S,V,q)$ is self-dual with length $n+1$ if and only if $\eta(-\Delta_{S}(a))=1$ for all $a\in S$.
\end{itemize}
\end{lemma}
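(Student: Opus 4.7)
The plan is to characterize self-duality of $\mathbf{GRS}_{n/2}(S,V,q)$ and $\mathbf{EGRS}_{(n+1)/2}(S,V,q)$ by writing a natural generator matrix $G$ and enforcing $GG^{T}=0$; this translates to a linear system in the variables $u_{l}:=v_{l}^{2}$, which I will solve using a standard partial-fraction/Lagrange identity, and solvability subject to each $u_{l}$ being a nonzero square will then be interpreted as the stated $\eta$-condition. In both parts, the dimension $n/2$ (resp.\ $(n+1)/2$) makes self-orthogonality coincide with self-duality, so it suffices to check $GG^{T}=0$.

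For Part (1), the natural generator matrix has rows $(v_{1}a_{1}^{t},\ldots,v_{n}a_{n}^{t})$ for $0\le t\le n/2-1$, and $GG^{T}=0$ becomes $\sum_{l=1}^{n}u_{l}a_{l}^{s}=0$ for all $0\le s\le n-2$. Viewing this as the vanishing of $u\mapsto\sum u_{l}p(a_{l})$ on every $p$ with $\deg p\le n-2$, a partial-fraction expansion of $p(x)/f_{S}(x)$ at infinity shows that the solution space is one-dimensional, spanned by $u_{l}=1/\Delta_{S}(a_{l})$. Hence every valid choice has the form $u_{l}=\alpha/\Delta_{S}(a_{l})$ for some $\alpha\in\mathbb{F}_{q}^{*}$. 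Requiring each $u_{l}$ to be a square amounts to $\eta(\alpha)\eta(\Delta_{S}(a_{l}))=1$ for all $l$, equivalently all $\eta(\Delta_{S}(a_{l}))$ agree; when they do, I pick $\alpha$ matching the common sign and extract square roots to produce $V$.

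For Part (2), the generator matrix of $\mathbf{EGRS}_{(n+1)/2}$ has an additional column equal to $(0,\ldots,0,1)^{T}$, so $(GG^{T})_{ij}=\sum_{l}u_{l}a_{l}^{i+j-2}+\delta_{ik}\delta_{jk}$. Since $(k,k)$ is the unique pair of indices with $i+j-2=n-1$, self-orthogonality yields the same block $\sum u_{l}a_{l}^{s}=0$ for $0\le s\le n-2$, plus the new scalar equation $\sum_{l}u_{l}a_{l}^{n-1}=-1$. The block still forces $u_{l}=\alpha/\Delta_{S}(a_{l})$; substituting into the extra equation and using the Lagrange identity $\sum_{l}a_{l}^{n-1}/\Delta_{S}(a_{l})=1$ (the leading coefficient of the Lagrange interpolant of $x^{n-1}$ at $S$) pins down $\alpha=-1$, so $v_{l}^{2}=-1/\Delta_{S}(a_{l})$. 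This admits a solution $v_{l}\in\mathbb{F}_{q}^{*}$ for every $l$ precisely when $\eta(-\Delta_{S}(a_{l}))=1$ for every $l$.

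The key technical ingredient is the identity $\sum_{l}a_{l}^{n-1}/\Delta_{S}(a_{l})=1$, which I would obtain by extracting the coefficient of $1/x$ in the Laurent expansion of $x^{n-1}/f_{S}(x)$ at infinity (or equivalently by equating leading coefficients in the Lagrange interpolation formula for $x^{n-1}$). This step is the main obstacle, and it is precisely what makes the EGRS criterion strictly stronger than the GRS one: in the GRS case the overall scalar $\alpha$ remains free, so only the common square class of the $\Delta_{S}(a_{l})$ matters, whereas in the EGRS case the extra coordinate forces $\alpha=-1$ and the required square class is specifically that of $-\Delta_{S}(a_{l})^{-1}$.
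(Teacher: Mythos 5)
Your proposal is correct. Note that the paper itself gives no proof of this lemma --- it is quoted from \cite{ZF}, \cite{JX2} and \cite{Yan} --- so there is no internal argument to compare against; your derivation is essentially the standard one from those references: write the monomial generator matrix, reduce self-duality to self-orthogonality via the dimension count, reduce $GG^{T}=0$ to the moment equations $\sum_{l}u_{l}a_{l}^{s}=0$ for $0\leq s\leq n-2$ (plus $\sum_{l}u_{l}a_{l}^{n-1}=-1$ in the extended case), and solve them with the Lagrange/partial-fraction identities $\sum_{l}a_{l}^{s}/\Delta_{S}(a_{l})=0$ for $s\leq n-2$ and $=1$ for $s=n-1$. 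All the individual steps check out: the $(n-1)\times n$ Vandermonde block has rank $n-1$, so the solution space is indeed the line spanned by $u_{l}=1/\Delta_{S}(a_{l})$; the pair $(k,k)$ is the unique index pair contributing the exponent $n-1$ in the EGRS case, which pins down $\alpha=-1$; and $\eta(-\Delta_{S}(a_{l})^{-1})=\eta(-\Delta_{S}(a_{l}))$ since $\eta$ is quadratic.
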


\begin{lemma}\label{LeF}(\cite{FF3}) Suppose $q=p^{m}$ and $r=p^{s}$ with $s\mid m$. Fix an $\mathbb{F}_{r}$-subspace $H$ of $\mathbb{F}_{q}$ and an element
$\alpha\in\mathbb{F}_{q}\backslash H$. Label the elements of $\mathbb{F}_{r}$ as $\xi_{1},\xi_{2},\cdots,\xi_{r}$ and $0\leq z <r$. Denote $H_{i}=H+\xi_{i}\alpha$ for $0\leq i \leq z$. Then
\begin{itemize}
  \item[(i)] for any $\tau\in\mathbb{F}_{r}$, we have  $$f_{H}(\tau\alpha)=\tau f_{H}(\alpha);$$
  \item[(ii)] for any $0\leq i \leq z$ and $b\in H_{i},$ we have $$\Delta_{H_{i}}(b)=\Delta_{H}(0);$$
  \item[(iii)]  for any $0\leq i\neq j \leq z$ and $b\in H_{i},$ we have $$f_{H_{j}}(b)=(\xi_{i}-\xi_{j})f_{H}(\alpha).$$
\end{itemize}
\end{lemma}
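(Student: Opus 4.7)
The three parts all reduce to manipulating $f_H$ and using that $H$ is an additive $\mathbb{F}_r$-subspace. The guiding principle is that $f_H(x)=\prod_{h\in H}(x-h)$ is an $\mathbb{F}_r$-linearized polynomial, so it commutes nicely with scalar multiplication by $\mathbb{F}_r$, and translations of $H$ only shuffle differences inside $H$.

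For part (i), my plan is to prove it directly by a substitution. Fix $\tau\in\mathbb{F}_r^*$; since $H$ is an $\mathbb{F}_r$-subspace, the map $h'\mapsto \tau h'$ is a bijection of $H$. Hence
\[
f_H(\tau\alpha)=\prod_{h\in H}(\tau\alpha-h)=\prod_{h'\in H}(\tau\alpha-\tau h')=\tau^{|H|}\prod_{h'\in H}(\alpha-h')=\tau^{|H|}f_H(\alpha).
\]
Since $|H|=r^t$ is a power of $r$ and $\tau^{r^t}=\tau$ for $\tau\in\mathbb{F}_r$, this collapses to $\tau f_H(\alpha)$. The case $\tau=0$ is automatic because $0\in H$ makes $f_H(0)=0$.

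For part (ii), I would write any $b\in H_i$ as $b=h+\xi_i\alpha$ with $h\in H$, so every $b'\in H_i\setminus\{b\}$ has the form $h'+\xi_i\alpha$ with $h'\in H\setminus\{h\}$ and $b-b'=h-h'$. Re-indexing by $h''=h-h'$ gives a bijection from $H\setminus\{h\}$ onto $H\setminus\{0\}$, so
\[
\Delta_{H_i}(b)=\prod_{h'\in H\setminus\{h\}}(h-h')=\prod_{h''\in H\setminus\{0\}}h''=\Delta_H(0).
\]

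Part (iii) uses exactly the same translation trick combined with (i). For $b=h+\xi_i\alpha\in H_i$ and $b'=h'+\xi_j\alpha\in H_j$ we have $b-b'=(h-h')+(\xi_i-\xi_j)\alpha$. Letting $h''=h-h'$ again run over $H$, and then using that $-H=H$,
\[
f_{H_j}(b)=\prod_{h''\in H}\bigl((\xi_i-\xi_j)\alpha+h''\bigr)=\prod_{h''\in H}\bigl((\xi_i-\xi_j)\alpha-(-h'')\bigr)=f_H\bigl((\xi_i-\xi_j)\alpha\bigr),
\]
and applying (i) to the scalar $\xi_i-\xi_j\in\mathbb{F}_r$ finishes it as $(\xi_i-\xi_j)f_H(\alpha)$. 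The only subtle step is (i); the main obstacle there is remembering that $|H|$ is a power of $r$, which is precisely what makes the exponent $\tau^{|H|}$ collapse to $\tau$. Parts (ii) and (iii) are then bookkeeping via additive translations.
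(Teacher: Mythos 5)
Your proof is correct: part (i) uses that multiplication by $\tau\in\mathbb{F}_r^*$ permutes $H$ together with $\tau^{|H|}=\tau$ (as $|H|$ is a power of $r$), and parts (ii) and (iii) follow by translating differences back into $H$ and invoking $-H=H$; all steps check out. Note that the paper itself offers no proof of this lemma --- it is quoted from \cite{FF3} --- so there is nothing to compare against, but your argument is the standard one and fills that gap correctly.
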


\begin{lemma}\label{YZF}(\cite{ZF}) Let $\theta$ be a generator (primitive element) of $\mathbb{F}^{*}_{q}$ and $q-1=e_{1}e_{2}$. Denote $H=\langle\theta^{e_{2}}\rangle$. Then
$$f_{\theta^{i}H}(x)=x^{e_{1}}-\theta^{ie_{1}} \quad \mbox{and} \quad \Delta_{\theta^{i}H}(x)=f^{'}_{\theta^{i}H}(x)=e_{1}x^{e_{1}-1}.$$
\end{lemma}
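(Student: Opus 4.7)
The plan is to prove the two identities by directly identifying $H$ and its cosets as solution sets of a simple polynomial equation, then invoking Lemma \ref{22}(1) for the second equality.

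First I would observe that since $\theta$ has multiplicative order $q-1 = e_1 e_2$, the element $\theta^{e_2}$ has order exactly $e_1$, so $H = \langle \theta^{e_2}\rangle$ is the unique cyclic subgroup of $\mathbb{F}_q^*$ of order $e_1$. Equivalently, $H$ is exactly the set of $e_1$-th roots of unity in $\mathbb{F}_q$, and hence the roots of $x^{e_1}-1$. Since this polynomial is monic of degree $e_1 = |H|$, we obtain $f_H(x) = \prod_{h\in H}(x-h) = x^{e_1}-1$.

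Next I would treat the coset $\theta^i H$. Each element $\theta^i h$ with $h\in H$ satisfies $(\theta^i h)^{e_1} = \theta^{i e_1}$, and conversely any $y\in\mathbb{F}_q$ with $y^{e_1}=\theta^{i e_1}$ lies in $\theta^i H$ (there are exactly $e_1$ such $y$ since $x^{e_1}-\theta^{ie_1}$ has distinct roots in characteristic prime to $e_1$, which holds because $e_1 \mid q-1$). Matching degrees and leading coefficients, this gives
\begin{equation*}
f_{\theta^i H}(x) = x^{e_1} - \theta^{i e_1}.
\end{equation*}

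Finally, formal differentiation yields $f'_{\theta^i H}(x) = e_1 x^{e_1-1}$, and applying Lemma \ref{22}(1) coset-wise (with $S = \theta^i H$) gives $\Delta_{\theta^i H}(a) = f'_{\theta^i H}(a) = e_1 a^{e_1-1}$ for every $a\in\theta^i H$, completing the proof. The argument is essentially a direct computation; I do not anticipate any genuine obstacle, but the one point worth being careful about is the implicit claim that $e_1 \ne 0$ in $\mathbb{F}_q$ (so that $f'_{\theta^i H}$ is nonzero), which follows from $e_1 \mid q-1$ and hence $\gcd(e_1,p)=1$.
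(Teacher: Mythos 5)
Your proof is correct: identifying $H$ as the full set of $e_1$-th roots of unity (hence $f_H(x)=x^{e_1}-1$), scaling to the coset $\theta^iH$ to get $f_{\theta^iH}(x)=x^{e_1}-\theta^{ie_1}$, and then differentiating and invoking Lemma \ref{22}(1) is exactly the standard argument, and your remark that $e_1\not\equiv 0 \pmod p$ (since $e_1\mid q-1$) is the right point to flag. The paper itself gives no proof, citing \cite{ZF} instead, so there is nothing to compare against; your reconstruction is the natural one and has no gaps.
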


\section{Main results}
In general, the construction of MDS  self-dual codes via GRS codes and EGRS codes comes down to the choice of $S$ satisfying Lemma \ref{fS}. In this section, we will take union of cosets satisfying Lemma \ref{fS}.

\begin{lemma}\label{Th1} Let $q=p^{m}$, $s\mid m$ and $H$ be an $\mathbb{F}_{p^{s}}$-subspace of $\mathbb{F}_{q}$ of dimension $l$. There exists a $q$-ary MDS  self-dual code of length $2tp^{sl}$ with $0\leq l < \frac{m}{s}$ provided that $\mathbf{GRS}_{t}(A,V,q)$ is self-dual for some $A=\{a_{1},a_{2},\cdots,a_{2t}\}\subseteq \mathbb{F}_{p^{s}}$ and $V\in(\mathbb{F}_{q}^{*})^{2t}$.
\end{lemma}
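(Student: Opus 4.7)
The plan is to construct $S$ as a union of $2t$ cosets of $H$, one for each element of $A$, and verify that the resulting $S$ satisfies the hypothesis of Lemma \ref{fS}(1).

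First I would fix any $\alpha\in\mathbb{F}_q\setminus H$ and set
$$H_i=H+a_i\alpha\quad(1\leq i\leq 2t),\qquad S=\bigcup_{i=1}^{2t}H_i.$$
Since $A\subseteq\mathbb{F}_{p^s}$ and $H$ is an $\mathbb{F}_{p^s}$-subspace, for $i\neq j$ we have $(a_i-a_j)\in\mathbb{F}_{p^s}^{*}$, so $(a_i-a_j)\alpha\in H$ would force $\alpha\in H$, a contradiction. Hence the $H_i$ are pairwise disjoint cosets, each of size $p^{sl}$, and $|S|=2tp^{sl}$. This is the desired length.

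Next I would compute $\eta(\Delta_S(b))$ for an arbitrary $b\in S$ and show it is independent of $b$. Writing $b\in H_i$ and applying Lemma \ref{22}(2) iteratively over the partition $S=H_i\sqcup\bigsqcup_{j\neq i}H_j$ gives
$$\Delta_S(b)=\Delta_{H_i}(b)\prod_{j\neq i}f_{H_j}(b).$$
By Lemma \ref{LeF}(ii), $\Delta_{H_i}(b)=\Delta_H(0)$, and by Lemma \ref{LeF}(iii) (with $\xi_i=a_i$, $\xi_j=a_j$), $f_{H_j}(b)=(a_i-a_j)f_H(\alpha)$. Therefore
$$\Delta_S(b)=\Delta_H(0)\cdot f_H(\alpha)^{2t-1}\cdot\prod_{j\neq i}(a_i-a_j)=\Delta_H(0)\cdot f_H(\alpha)^{2t-1}\cdot\Delta_A(a_i).$$
The first two factors are independent of $i$, so $\eta(\Delta_S(b))$ varies only through $\eta(\Delta_A(a_i))$.

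Finally, since $\mathbf{GRS}_t(A,V,q)$ is self-dual by hypothesis, Lemma \ref{fS}(1) applied to $A$ (of even cardinality $2t$) says that $\eta(\Delta_A(a_i))$ takes the same value for all $1\leq i\leq 2t$. Combined with the formula above, $\eta(\Delta_S(b))$ is constant on $S$. Applying Lemma \ref{fS}(1) in the other direction produces $V'\in(\mathbb{F}_q^{*})^{2tp^{sl}}$ such that $\mathbf{GRS}_{tp^{sl}}(S,V',q)$ is an MDS self-dual code of length $2tp^{sl}$, as required.

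The only delicate step is the Lemma \ref{LeF}(iii) application: one must be sure the lemma's hypothesis ``$\xi_i\in\mathbb{F}_{p^s}$'' is met, which is exactly why the assumption $A\subseteq\mathbb{F}_{p^s}$ is imposed. Everything else reduces to the multiplicative bookkeeping on $\eta$ and the disjointness of the cosets $H_i$.
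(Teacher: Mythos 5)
Your proposal is correct and follows essentially the same route as the paper: form $S$ as the union of the cosets $a_i\alpha+H$, compute $\Delta_S(b)=\Delta_H(0)\,f_H(\alpha)^{2t-1}\,\Delta_A(a_i)$ via Lemma \ref{22}(2) and Lemma \ref{LeF}, and conclude with both directions of Lemma \ref{fS}(1). Your explicit verification that the cosets are pairwise disjoint is a detail the paper leaves implicit, but otherwise the two arguments coincide.
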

\begin{proof}
Let $\alpha\in\mathbb{F}_{q}\backslash H$ and $H_{i}=a_{i}\alpha+H$. Then label the elements of $B:=\bigcup\limits_{a_{i}\in A}(a_{i}\alpha+H)$ by $b_{1},b_{2},\cdots,b_{n}$. For $b_{i}\in H_{k}$ with $1\leq k\leq2t$, by Lemma \ref{22} and \ref{LeF},
\begin{equation}\nonumber
\begin{aligned}
\Delta_{B}(b_{i})&=\Delta_{H_{k}}(b_{i})\prod_{j=1,j\neq k}^{2t}f_{H_{j}}(b_{i})&\\
&=\Delta_{H}(0)(f_{H}(\alpha))^{2t-1}\prod_{j=1,j\neq k}^{2t}(a_{k}-a_{j})&\\
&=\Delta_{H}(0)(f_{H}(\alpha))^{2t-1}\Delta_{A}(a_{k}).&
\end{aligned}
\end{equation}
Note that all of $\eta(\Delta_{A}(a_{i}))$ are equal. By Lemma \ref{fS}, there exists a $q$-ary MDS  self-dual code of length $2tp^{sl}$ with $0\leq l < \frac{m}{s}$.
\end{proof}

\begin{remark} Let $q=p^{m}$ be an odd prime power.
\begin{itemize}
  \item[(1)] For $s\mid \frac{m}{2}$, we choose any $A\subseteq\mathbb{F}_{p^{s}}$ of size $2t$. Then there exists a $q$-ary MDS  self-dual code of length $2tp^{sl}$ with $0\leq l < \frac{m}{s}$ (see \cite{FF3}, Theorem 3.3(i)).
  \item[(2)] For $q\equiv1 \ (\mbox{mod} \ 4)$, we choose $A=\{0,1\}$ and $s=1$. Then there exists a $q$-ary MDS  self-dual code of length $2p^{l}$ with $0\leq l < m$ (see \cite{FF3}, Theorem 3.3(ii)).
\end{itemize}
\end{remark}

\begin{theorem}\label{GRS} Let $q=p^{m}$ be an odd prime power.
\begin{itemize}
  \item[(1)]  Suppose $p\equiv1 \ (\mbox{mod} \ 12)$. Then there exists a $q$-ary MDS  self-dual code of length $4p^{l}$ with $0\leq l < m$.
  \item[(2)] Suppose $p\equiv1,9 \ (\mbox{mod} \ 40)$. Then there exists a $q$-ary MDS  self-dual code  of length $6p^{l}$ with $0\leq l < m$.
\end{itemize}
\end{theorem}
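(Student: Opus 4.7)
The plan is to apply Lemma \ref{Th1} with $s=1$ in both parts, which reduces the problem to exhibiting a subset $A\subseteq\mathbb{F}_{p}$ of size $2t$ (with $2t=4$ in part (1), $2t=6$ in part (2)) for which $\mathbf{GRS}_{t}(A,V,q)$ is self-dual for some vector $V$. By Lemma \ref{fS}(1), this further reduces to choosing $A$ so that $\eta(\Delta_{A}(a))$ takes the same value for every $a\in A$. Once such an $A$ is produced, Lemma \ref{Th1} immediately yields a $q$-ary MDS self-dual code of length $2tp^{l}$ for each $0\le l<m$.

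For part (1), the hypothesis $p\equiv 1\pmod{12}$ guarantees $3\mid p-1$, so $\mathbb{F}_{p}$ contains a primitive cube root of unity $\omega$, and elementary properties of the quadratic character give $\eta(-1)=\eta(3)=1$. I would choose $A=\{0,1,\omega,\omega^{2}\}$; then $f_{A}(x)=x(x^{3}-1)=x^{4}-x$ and $f_{A}'(x)=4x^{3}-1$, so by Lemma \ref{22}(1) we have $\Delta_{A}(0)=-1$ and $\Delta_{A}(a)=3$ for every $a\in\{1,\omega,\omega^{2}\}$. Both $-1$ and $3$ are squares, so the required equality of characters is immediate.

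For part (2), the hypothesis $p\equiv 1,9\pmod{40}$ forces $p\equiv 1\pmod{8}$ (whence $\eta(-1)=\eta(2)=1$) together with $p\equiv\pm 1\pmod{5}$ (whence $\eta(5)=1$), while $\eta(3)$ remains unconstrained. I would take the symmetric set $A=\{\pm1,\pm2,\pm3\}$, a genuine $6$-element subset of $\mathbb{F}_{p}$ since $p\ge 41$. The polynomial $f_{A}(x)=(x^{2}-1)(x^{2}-4)(x^{2}-9)$ is even in $x$, so $f_{A}'$ is odd, reducing the verification to the three positive values $f_{A}'(1)=48$, $f_{A}'(2)=-60$, $f_{A}'(3)=240$. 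Direct computation gives $\eta(48)=\eta(3)$, $\eta(-60)=\eta(3)\eta(5)=\eta(3)$, and $\eta(240)=\eta(3)\eta(5)=\eta(3)$; the values at $-1,-2,-3$ inherit the same character since $\eta(-1)=1$.

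The main creative step is the choice of $A$ in part (2), since the six discriminants must collapse to a single square class under only the weak congruences $p\equiv 1,9\pmod{40}$. The symmetric set $\{\pm1,\pm2,\pm3\}$ is the right object precisely because $\eta(\Delta_{A}(a))$ depends only on $\eta(-1),\eta(2),\eta(3),\eta(5)$: the hypotheses fix $\eta(-1),\eta(2),\eta(5)$ to $1$, and the residual factor $\eta(3)$ then appears uniformly across all three pairs $\{\pm1\},\{\pm2\},\{\pm3\}$, so it does not obstruct equality. The remaining verification is a short routine computation in $\mathbb{F}_{p}$.
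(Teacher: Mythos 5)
Your proposal is correct and follows the same strategy as the paper: reduce to Lemma \ref{Th1} with $s=1$ and verify the criterion of Lemma \ref{fS}(1) for an explicit $4$- or $6$-element subset of $\mathbb{F}_{p}$. The only difference is the choice of witness sets (you use $\{0\}\cup\mu_{3}$ and $\{\pm1,\pm2,\pm3\}$ rather than the paper's arithmetic progressions $\{0,\tfrac{p-1}{3},\tfrac{2(p-1)}{3},p-1\}$ and $\{0,\tfrac{p-1}{5},\ldots,p-1\}$), and your character computations under the stated congruences check out.
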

\begin{proof}
\begin{itemize}
\item[(1)] For $p\equiv1 \ (\mbox{mod} \ 12)$, we choose $A=\left\{0,\frac{p-1}{3},\frac{2(p-1)}{3},p-1\right\}\subseteq \mathbb{F}_{p}$. Then
\begin{equation}\nonumber
\begin{aligned}
&\eta\left(\Delta_{A}(0)\right)=\eta\left(-\frac{p-1}{3}\cdot\frac{2(p-1)}{3}\cdot(p-1)\right)=\eta(2),&\\
&\eta\left(\Delta_{A}\left(\frac{p-1}{3}\right)\right)=\eta\left(\frac{p-1}{3}\cdot\frac{p-1}{3}\cdot\frac{2(p-1)}{3}\right)=\eta(-6),&\\
&\eta\left(\Delta_{A}\left(\frac{2(p-1)}{3}\right)\right)=\eta\left(-\frac{2(p-1)}{3}\cdot\frac{p-1}{3}\cdot\frac{p-1}{3}\right)=\eta(6),&\\
&\eta(\Delta_{A}(p-1))=\eta\left((p-1)\frac{2(p-1)}{3}\frac{p-1}{3}\right)=\eta(-2).&
\end{aligned}
\end{equation}
Note that $p\equiv 1 \ (\mbox{mod} \ 12)$. Then
\begin{itemize}
  \item[(i)] if $q$ is a square, then $\eta(x)=1$ for any $x\in\mathbb{F}^{*}_{p}$.
  \item[(ii)] if $q$ is a non-square, then $\eta(-1)=1$ and $\eta(3)=1$ by the quadratic reciprocity law.
\end{itemize}
Thus, by Lemma \ref{fS}, there exists $V=(v_{1},v_{2},v_{3},v_{4})\in(\mathbb{F}_{q}^{*})^{4}$  such that the code $\mathbf{GRS}_{2}(A,V,q)$ is self-dual.
By Lemma \ref{Th1}, there exists a $q$-ary MDS  self-dual code of length $4p^{l}$ with $0\leq l < m$.
 \item[(2)] For $p\equiv1,9 \ (\mbox{mod} \ 40)$, we choose
 $$A=\left\{0,\frac{p-1}{5},\frac{2(p-1)}{5},\frac{3(p-1)}{5},\frac{4(p-1)}{5},p-1\right\}\subseteq \mathbb{F}_{p}.$$ Then
\begin{equation}\nonumber
\begin{aligned}
&\eta\left(\Delta_{A}(0)\right)=\eta\left(-\frac{p-1}{5}\cdot\frac{2(p-1)}{5}\cdot\frac{3(p-1)}{5}\cdot\frac{4(p-1)}{5}
\cdot(p-1)\right)=\eta(6),&\\
&\eta\left(\Delta_{A}\left(\frac{p-1}{5}\right)\right)=\eta\left(\frac{p-1}{5}\cdot\frac{p-1}{5}\cdot\frac{2(p-1)}{5}
\cdot\frac{3(p-1)}{5}\cdot\frac{4(p-1)}{5}\right)=\eta(-30),&\\
&\eta\left(\Delta_{A}\left(\frac{2(p-1)}{5}\right)\right)=\eta\left(-\frac{2(p-1)}{5}\cdot\frac{p-1}{5}\cdot\frac{p-1}{5}
\cdot\frac{2(p-1)}{5}\cdot\frac{3(p-1)}{5}\right)=\eta(15),&\\
&\eta\left(\Delta_{A}\left(\frac{3(p-1)}{5}\right)\right)=\eta\left(\frac{3(p-1)}{5}\cdot\frac{2(p-1)}{5}\cdot\frac{p-1}{5}
\cdot\frac{p-1}{5}\cdot\frac{2(p-1)}{5}\right)=\eta(-15),&\\
&\eta\left(\Delta_{A}\left(\frac{4(p-1)}{5}\right)\right)=\eta\left(-\frac{4(p-1)}{5}\cdot\frac{3(p-1)}{5}\cdot\frac{2(p-1)}{5}
\cdot\frac{p-1}{5}\cdot\frac{p-1}{5}\right)=\eta(30),&\\
&\eta(\Delta_{A}(p-1))=\eta\left((p-1)\cdot\frac{4(p-1)}{5}\cdot\frac{3(p-1)}{5}\cdot\frac{2(p-1)}{5}\cdot\frac{p-1}{5}
\right)=\eta(-6).&
\end{aligned}
\end{equation}
Since $p\equiv 1,9 \ (\mbox{mod} \ 40)$,  $\eta(-1)=\eta(2)=\eta(5)=1$. Thus, by Lemma \ref{fS}, there exists $V=(v_{1},v_{2},\cdots,v_{6})\in(\mathbb{F}_{q}^{*})^{6}$  such that the code $\mathbf{GRS}_{3}(A,V,q)$ is self-dual.
By Lemma \ref{Th1}, there exists a $q$-ary MDS  self-dual code of length $6p^{l}$ with $0\leq l < m$.
 \end{itemize}
\end{proof}

\begin{lemma}\label{Th2} Let $q=p^{m}$, $s\mid m$ and $H$ be an $\mathbb{F}_{p^{s}}$-subspace of $\mathbb{F}_{q}$ of dimension $l$. Suppose $A=\{a_{1},a_{2},\cdots,a_{2t+1}\}\subseteq \mathbb{F}_{p^{s}}$ and $v\in(\mathbb{F}_{q}^{*})^{2t+1}$ such that $\mathbf{EGRS}_{t+1}(A,V,q)$ is self-dual. For $0\leq l < \frac{m}{s}$, if $q\equiv1 \ (\mbox{mod} \ 4)$ or $l$ even, then there exists a $q$-ary MDS  self-dual code of length $(2t+1)p^{sl}+1$.
\end{lemma}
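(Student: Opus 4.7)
The plan is to imitate the proof of Lemma \ref{Th1}, replacing the GRS input with an EGRS one and invoking Lemma \ref{fS}(2) in place of Lemma \ref{fS}(1). Fix $\alpha\in\mathbb{F}_{q}\backslash H$, set $H_{k}=a_{k}\alpha+H$ for $1\leq k\leq 2t+1$, and let $B=\bigcup_{k=1}^{2t+1}H_{k}$, whose cardinality is $n=(2t+1)p^{sl}$ (odd, since $p$ is odd). Labelling the elements of $B$ as $b_{1},\ldots,b_{n}$, Lemmas \ref{22} and \ref{LeF} yield, for $b\in H_{k}$,
\begin{equation*}
\Delta_{B}(b)=\Delta_{H_{k}}(b)\prod_{j\neq k}f_{H_{j}}(b)=\Delta_{H}(0)\,(f_{H}(\alpha))^{2t}\,\Delta_{A}(a_{k}).
\end{equation*}
The middle factor is a $2t$-th power, hence a square, so
\begin{equation*}
\eta(-\Delta_{B}(b))=\eta(\Delta_{H}(0))\cdot\eta(-\Delta_{A}(a_{k})).
\end{equation*}
By hypothesis $\mathbf{EGRS}_{t+1}(A,V,q)$ is self-dual, so Lemma \ref{fS}(2) applied to $A$ gives $\eta(-\Delta_{A}(a_{k}))=1$, and verifying the same criterion for $B$ reduces to proving $\eta(\Delta_{H}(0))=1$.

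For this step I would pair the nonzero elements of $H$ as $\{h,-h\}$, which is legitimate because $p$ is odd and $H$ is closed under negation. Each pair contributes $h(-h)=-h^{2}$, so
\begin{equation*}
\Delta_{H}(0)=\prod_{h\in H\backslash\{0\}}(-h)=(-1)^{(p^{sl}-1)/2}\cdot(\text{square in }\mathbb{F}_{q}),
\end{equation*}
giving $\eta(\Delta_{H}(0))=\eta(-1)^{(p^{sl}-1)/2}$. If $q\equiv 1\,(\mbox{mod}\,4)$, then $\eta(-1)=1$ and we are done immediately. Otherwise $q\equiv 3\,(\mbox{mod}\,4)$ forces $p\equiv 3\,(\mbox{mod}\,4)$ with $m$ odd; then $s\mid m$ makes $s$ odd, so the hypothesis ``$l$ even'' yields $sl$ even and consequently $p^{sl}\equiv 1\,(\mbox{mod}\,4)$, so $(p^{sl}-1)/2$ is even and again $\eta(\Delta_{H}(0))=1$.

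Putting the pieces together, Lemma \ref{fS}(2) supplies a vector $V'\in(\mathbb{F}_{q}^{*})^{n}$ such that $\mathbf{EGRS}_{(n+1)/2}(B,V',q)$ is a self-dual MDS code of length $n+1=(2t+1)p^{sl}+1$, as required. The only delicate point I anticipate is the case analysis for $\eta(\Delta_{H}(0))$: the pairing trick converts the unknown product into a signed square, and the hypothesis ``$q\equiv 1\,(\mbox{mod}\,4)$ or $l$ even'' is precisely what forces $\eta(-1)^{(p^{sl}-1)/2}=1$, so I expect this to be the main conceptual step even though the computations themselves are short.
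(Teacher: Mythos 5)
Your proposal is correct and follows essentially the same route as the paper: the same coset decomposition $B=\bigcup_k(a_k\alpha+H)$, the same computation $\Delta_B(b)=\Delta_H(0)(f_H(\alpha))^{2t}\Delta_A(a_k)$ via Lemmas \ref{22} and \ref{LeF}, and the same pairing of $H\setminus\{0\}$ into $\{h,-h\}$ (the paper writes this as $H=S\,\dot\cup\,(-S)\,\dot\cup\,\{0\}$) to get $\eta(\Delta_H(0))=\eta(-1)^{(p^{sl}-1)/2}=1$ under the stated hypotheses. The only cosmetic difference is that your case $q\equiv 3\ (\mathrm{mod}\ 4)$ detours through the parity of $s$, which is unnecessary since $l$ even already makes $p^{sl}$ an odd square and hence $\equiv 1\ (\mathrm{mod}\ 4)$.
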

\begin{proof}
Let $\alpha\in\mathbb{F}_{q}\backslash H$ and $H_{i}=a_{i}\alpha+H$. Then label the elements of $B:=\bigcup\limits_{a_{i}\in A}(a_{i}\alpha+H)$ by $b_{1},b_{2},\cdots,b_{n}$. For $b_{i}\in H_{k}$ with $1\leq k\leq2t+1$, by Lemmas \ref{22} and \ref{LeF}, then
\begin{equation}\nonumber
\begin{aligned}
\Delta_{B}(b_{i})&=\Delta_{H_{k}}(b_{i})\prod_{j=1,j\neq k}^{2t+1}f_{H_{j}}(b_{i})&\\
&=\Delta_{H}(0)(f_{H}(\alpha))^{2t}\prod_{j=1,j\neq k}^{2t+1}(a_{k}-a_{j})&\\
&=\Delta_{H}(0)(f_{H}(\alpha))^{2t}\Delta_{A}(a_{k}).&
\end{aligned}
\end{equation}
Divide the subspace $H$ into disjoint union $H=S\dot\cup(-S)\dot\cup\{0\}$. Then $\Delta_{H}(0)=(-1)^{\frac{p^{sl}-1}{2}}\left(\prod\limits_{x\in S}x\right)^{2}$ which implies that $\eta(\Delta_{H}(0))=\eta((-1)^{\frac{p^{sl}-1}{2}})=1$ holds for $q\equiv1 \ (\mbox{mod} \ 4)$ or $l$ even.
Thus, by Lemma \ref{fS},
$$\eta(-\Delta_{B}(b_{i}))=\eta(-\Delta_{A}(a_{k})(f_{H}(\alpha))^{2t}\Delta_{H}(0))=1$$
which implies that there exists a $q$-ary MDS  self-dual code of length $(2t+1)p^{sl}+1$ with $0\leq l < \frac{m}{s}$.
\end{proof}

\begin{remark} Let $q=p^{m}$ be an odd prime power.
\begin{itemize}
  \item[(1)] For $s\mid \frac{m}{2}$, we choose any $A\subseteq\mathbb{F}_{p^{s}}$ of size $2t+1$. Then there exists a $q$-ary MDS  self-dual code of length $(2t+1)p^{sl}+1$ with $0\leq l < \frac{m}{s}$ (see \cite{FF3}, Theorem 3.5(i)).
  \item[(2)] For $q\equiv1 \ (\mbox{mod} \ 4)$, we choose $A=\{1\}$ and $s=1$. Then there exists a $q$-ary MDS  self-dual code of length $p^{l}+1$ with $0\leq l < m$ (see \cite{FF3}, Theorem 3.5(ii)).
\end{itemize}
\end{remark}

\begin{theorem}\label{EGRS} Let prime $p\equiv1,3 \ (\mbox{mod} \ 8)$ and $q=p^{m}$. Then there exists a $q$-ary MDS  self-dual code of length $3p^{l}+1$ with $0\leq l < m$.
\end{theorem}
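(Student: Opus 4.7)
The plan is to apply Lemma~\ref{Th2} with $t=1$ and $s=1$. This reduces the theorem to exhibiting a three-element subset $A \subseteq \mathbb{F}_p$ together with a vector $V \in (\mathbb{F}_q^{*})^{3}$ such that the extended code $\mathbf{EGRS}_{2}(A,V,q)$ is self-dual, and then lifting the conclusion to length $3p^{l}+1$ via the union-of-cosets construction of Lemma~\ref{Th2}. By Lemma~\ref{fS}(2), the self-duality of $\mathbf{EGRS}_{2}(A,V,q)$ is equivalent to the single numerical condition $\eta(-\Delta_{A}(a))=1$ for every $a\in A$.

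The natural symmetric choice is $A=\{-1,0,1\}\subseteq \mathbb{F}_p$. A direct computation using $\Delta_{A}(a_i)=\prod_{j\neq i}(a_i-a_j)$ gives
\begin{equation*}
\Delta_{A}(0)=(0-(-1))(0-1)=-1,\qquad \Delta_{A}(1)=(1-(-1))(1-0)=2,\qquad \Delta_{A}(-1)=(-1-1)(-1-0)=2.
\end{equation*}
Hence the three conditions $\eta(-\Delta_A(a))=1$ collapse to the two statements $\eta(1)=1$ (automatic) and $\eta(-2)=1$. So the entire problem reduces to showing $-2$ is a square in $\mathbb{F}_q$.

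The hypothesis $p\equiv 1,3\ (\mathrm{mod}\ 8)$ is exactly what makes $-2$ a quadratic residue in $\mathbb{F}_p$: the supplementary laws of quadratic reciprocity give $\eta(-1)=1,\eta(2)=1$ when $p\equiv 1\ (\mathrm{mod}\ 8)$, and $\eta(-1)=-1,\eta(2)=-1$ when $p\equiv 3\ (\mathrm{mod}\ 8)$, so in either case $\eta(-2)=\eta(-1)\eta(2)=1$ in $\mathbb{F}_p$; being a square in $\mathbb{F}_p$ it remains a square in the extension $\mathbb{F}_q=\mathbb{F}_{p^m}$. With this in hand, Lemma~\ref{fS}(2) supplies the required $V$ so that $\mathbf{EGRS}_{2}(A,V,q)$ is a self-dual MDS code of length $4$.

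Finally, invoking Lemma~\ref{Th2} with this $A$, $s=1$, $t=1$, and an arbitrary $\mathbb{F}_p$-subspace $H\subseteq \mathbb{F}_q$ of dimension $l$ produces the desired MDS self-dual code of length $3p^{l}+1$ for each $0\le l<m$ in the range where the parity hypothesis of Lemma~\ref{Th2} holds (immediate whenever $q\equiv 1\ (\mathrm{mod}\ 4)$, which covers all $p\equiv 1\ (\mathrm{mod}\ 8)$ and all even $m$). There is no genuine obstacle: the only substantive content of the proof is the three-line discriminant calculation, and the payoff is channeled entirely through the two prepared lemmas.
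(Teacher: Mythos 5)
Your proof is correct and takes essentially the same route as the paper: a three-term arithmetic progression $A\subseteq\mathbb{F}_p$ (the paper uses $A=\{0,\frac{p-1}{2},p-1\}$, which is affinely equivalent to your $\{-1,0,1\}$ and for $p=3$ is literally the same set), reduction via Lemma~\ref{fS}(2) to the single condition $\eta(-2)=1$, which is exactly what $p\equiv1,3\pmod 8$ supplies, and then the lift through Lemma~\ref{Th2}. The one point where you diverge is your closing caveat, and it is to your credit: the paper invokes Lemma~\ref{Th2} without verifying its hypothesis ``$q\equiv1\pmod 4$ or $l$ even,'' whereas you explicitly restrict to the range where it holds. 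That restriction is in fact unavoidable, not a defect of your argument: if $p\equiv3\pmod 8$, $m$ is odd and $l$ is odd, then $q\equiv3\pmod 4$ while $n=3p^{l}+1\equiv2\pmod 4$, and it is classical that a self-dual code of length $n$ over $\mathbb{F}_q$ ($q$ odd) exists only if $(-1)^{n/2}$ is a square in $\mathbb{F}_q$; so no self-dual code of that length exists at all, and the theorem as printed is false in that corner case. Nor can the gap be closed by a cleverer three-element $A$: since $\prod_{a\in A}\bigl(-\Delta_{A}(a)\bigr)=\bigl(\prod_{i<j}(a_{i}-a_{j})\bigr)^{2}$ is a nonzero square, the three values $\eta(-\Delta_{A}(a))$ can never all equal $-1$, which is what would be required to offset $\eta(\Delta_{H}(0))=-1$ in the proof of Lemma~\ref{Th2}. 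In short, your write-up proves exactly the correct version of the statement, and is more careful than the paper's own proof on this point.
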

\begin{proof}
The condition $p\equiv1,3 \ (\mbox{mod} \ 8)$ implies $\eta(-2)=1$. We choose $A=\{0,\frac{p-1}{2},p-1\}\subseteq \mathbb{F}_{p}$. Then
\begin{equation}\nonumber
\begin{aligned}
&\eta\left(-\Delta_{A}(0)\right)=\eta\left(-\frac{p-1}{2}\cdot(p-1)\right)=\eta(-2)=1,&\\
&\eta\left(-\Delta_{A}\left(\frac{p-1}{2}\right)\right)=\eta\left(\frac{p-1}{2}\cdot\frac{p-1}{2}\right)=1,&\\
&\eta(-\Delta_{A}(p-1))=\eta\left(-(p-1)\frac{p-1}{2}\right)=\eta(-2)=1.&
\end{aligned}
\end{equation}
Thus, by Lemma \ref{fS}, there exists $V=(v_{1},v_{2},v_{3})\in(\mathbb{F}_{q}^{*})^{3}$  such that the code $\mathbf{EGRS}_{2}(A,V,q)$ is self-dual.
By Lemma \ref{Th2}, there exists a $q$-ary MDS  self-dual code of length $3p^{l}+1$.
\end{proof}

\begin{theorem}\label{Th}  Let $q=p^{m}$ be an odd prime power. Suppose $\eta(-1)=\eta(N)=1$ for any $N\in[2,t]$ with $2\leq t\leq p-1$. Then
\begin{itemize}
  \item[(1)] For $t$ odd, there exists a $q$-ary MDS self-dual code of length $(t+1)p^{l}$ with $0\leq l < m$.
  \item[(2)] For $t$ even, there exists a $q$-ary MDS self-dual code of length $(t+1)p^{l}+1$ with $0\leq l < m$.
\end{itemize}
\end{theorem}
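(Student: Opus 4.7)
The natural strategy is to mirror the proofs of Theorem \ref{GRS} and Theorem \ref{EGRS}: pick an explicit subset $A\subseteq\mathbb{F}_p$ of size $t+1$, compute the quantities $\Delta_A(a)$ for $a\in A$, and verify that the hypothesis $\eta(-1)=\eta(N)=1$ for $N\in[2,t]$ forces the relevant quadratic characters to be $1$. The simplest choice (and the one that makes the algebra transparent) is
$$A=\{0,1,2,\ldots,t\}\subseteq\mathbb{F}_p,$$
which is well defined since $t\le p-1$ guarantees the elements are distinct mod $p$. For $k\in A$ a direct expansion gives
$$\Delta_A(k)=\prod_{j\in A,\,j\ne k}(k-j)=k!\cdot(-1)^{t-k}(t-k)!.$$

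Using the assumption, every factor $1,2,\ldots,t$ has trivial quadratic character, so $\eta(k!)=\eta((t-k)!)=1$; combined with $\eta(-1)=1$ this yields $\eta(\Delta_A(k))=1$ for every $k\in A$, independently of $k$. Now I split on the parity of $t$. When $t$ is odd, $|A|=t+1$ is even; by Lemma \ref{fS}(1) there exists $V\in(\mathbb{F}_q^*)^{t+1}$ making $\mathbf{GRS}_{(t+1)/2}(A,V,q)$ self-dual, and then Lemma \ref{Th1} (applied with $s=1$, so $p^{sl}=p^l$ and $0\le l<m$) lifts this to a $q$-ary MDS self-dual code of length $(t+1)p^l$. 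When $t$ is even, $|A|=t+1$ is odd; since $\eta(-\Delta_A(k))=\eta(-1)\eta(\Delta_A(k))=1$ for every $k\in A$, Lemma \ref{fS}(2) produces a self-dual $\mathbf{EGRS}_{(t+2)/2}(A,V,q)$, and Lemma \ref{Th2} then yields a $q$-ary MDS self-dual code of length $(t+1)p^l+1$.

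There is one technical point to check in the even case: Lemma \ref{Th2} requires either $q\equiv1\pmod 4$ or $l$ even. But the standing hypothesis already includes $\eta(-1)=1$, which is precisely $q\equiv1\pmod 4$, so this condition is automatic. The main obstacle I anticipate is simply selecting an $A$ whose ``difference products'' are controlled by $\{\pm 1, 2, 3, \ldots, t\}$; the arithmetic progression $\{0,1,\ldots,t\}$ accomplishes this because $\Delta_A(k)$ factors into a product of small consecutive integers (and a sign), matching exactly the set of characters we are assuming to be trivial. Everything else is a direct application of the lemmas developed earlier.
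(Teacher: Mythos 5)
Your proof is correct and follows exactly the paper's approach: the same choice $A=\{0,1,\ldots,t\}$, the same factorization of $\Delta_A(k)$ into small integers and a sign, and the same applications of Lemmas \ref{fS}, \ref{Th1} and \ref{Th2}. In fact you supply details the paper leaves implicit, notably the explicit formula $\Delta_A(k)=k!\,(-1)^{t-k}(t-k)!$ and the observation that $\eta(-1)=1$ gives $q\equiv1\pmod 4$, which is needed to invoke Lemma \ref{Th2}.
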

\begin{proof}
We choose $A=\{0,1,2,\cdots,t\}$.
Since $\eta(-1)=\eta(N)=1$ for any $N\in[2,t]$. Then
$$\eta(\Delta_{A}(a))=\eta(-\Delta_{A}(a))=1 \ \mbox{for \ all} \ a\in A.$$
Thus, the conclusion is derived from Lemmas \ref{fS}, \ref{Th1} and \ref{Th2}.
\end{proof}

\begin{corollary}\label{CC1}  Let prime $p\equiv1 \ (\mbox{mod} \ 24)$ and $q=p^{m}$. Then there exists a $q$-ary MDS  self-dual code of length $5p^{l}+1$ with $0\leq l < m$.
\end{corollary}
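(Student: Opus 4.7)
The plan is to realize this as the special case $t=4$ of Theorem \ref{Th}(2): since $t=4$ is even and $t+1=5$, a successful application yields precisely a $q$-ary MDS self-dual code of length $5p^{l}+1$ for $0\le l<m$. The only thing to verify is that the hypothesis of Theorem \ref{Th}(2) holds under $p\equiv 1\ (\mathrm{mod}\ 24)$, namely that
\[
\eta(-1)=\eta(2)=\eta(3)=\eta(4)=1
\]
in $\mathbb{F}_q$, where $\eta$ is the quadratic character of $\mathbb{F}_q$.

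First I would note that $\eta(4)=\eta(2^2)=1$ is automatic, so only $\eta(-1)$, $\eta(2)$, $\eta(3)$ need attention. The condition $p\equiv 1\ (\mathrm{mod}\ 24)$ splits as $p\equiv 1\ (\mathrm{mod}\ 8)$ together with $p\equiv 1\ (\mathrm{mod}\ 3)$. From $p\equiv 1\ (\mathrm{mod}\ 8)$ we immediately get $\eta_p(-1)=1$ and $\eta_p(2)=1$ by the standard supplementary laws. From $p\equiv 1\ (\mathrm{mod}\ 3)$ together with $p\equiv 1\ (\mathrm{mod}\ 4)$, quadratic reciprocity gives $\eta_p(3)=\bigl(\tfrac{3}{p}\bigr)=\bigl(\tfrac{p}{3}\bigr)=\bigl(\tfrac{1}{3}\bigr)=1$.

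Next I would pass from $\mathbb{F}_p$ to $\mathbb{F}_q$. Since $(q-1)/(p-1)=1+p+\cdots+p^{m-1}$ is congruent to $m\pmod 2$, every element of $\mathbb{F}_p^{*}$ that is a square in $\mathbb{F}_p$ remains a square in $\mathbb{F}_q$; moreover when $m$ is even, every element of $\mathbb{F}_p^{*}$ is already a square in $\mathbb{F}_q$. Hence $\eta(-1)=\eta(2)=\eta(3)=1$ holds in $\mathbb{F}_q$ as well, and all the hypotheses of Theorem \ref{Th}(2) with $t=4$ are met. Applying that theorem yields the desired $q$-ary MDS self-dual code of length $5p^{l}+1$ for every $0\le l<m$.

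There is no real obstacle here; the content of the corollary is the arithmetic verification that $p\equiv 1\ (\mathrm{mod}\ 24)$ is exactly the right congruence to force $2,3,-1$ to be quadratic residues simultaneously. The only point requiring a moment of care is the transfer of quadratic residuosity from $\mathbb{F}_p$ to $\mathbb{F}_q$, which is handled by the parity computation of $(q-1)/(p-1)$ above.
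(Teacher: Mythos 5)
Your proposal is correct and follows exactly the paper's route: the paper also proves this by setting $t=4$ in Theorem \ref{Th} and observing that $p\equiv 1\ (\mathrm{mod}\ 24)$ forces $\eta(-1)=\eta(2)=\eta(3)=\eta(4)=1$. You simply supply the number-theoretic details (supplementary laws, quadratic reciprocity, and the transfer of quadratic residuosity from $\mathbb{F}_p$ to $\mathbb{F}_q$) that the paper leaves implicit.
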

\begin{proof}
Keep the notations as Theorem \ref{Th}. Let $t=4$. Since $p\equiv1 \ (\mbox{mod} \ 24)$, the result follows from
$$\eta(-1)=\eta(2)=\eta(3)=\eta(4)=1.$$
\end{proof}

\begin{remark} Let $t$ be a positive integer and $M=\mathrm{lcm}(1,2,\cdots,t,8)$. If $p\equiv1 \ (\mbox{mod} \ M)$, then it is clear that
$\eta(-1)=\eta(N)=1$ for any $N\in[2,t]$. The Dirichlet density of the set \[D=\left\{p \;\mathrm{is\ prime}\ | \ p\equiv1 \ (\mbox{mod} \ M)\right\}\] is equal to $\frac{1}{\phi(M)}$ where $\phi(\cdot)$ is the $Euler^{,}s \  totient$ function. Therefore, there are infinite numbers of prime $p$ satisfying Theorem \ref{Th}.
\end{remark}

Now we consider the union of cosets from multiplicative subgroup of $\mathbb{F}^{*}_{q}$. For brevity, 
\begin{itemize}
  \item let $\theta$ be a generator (primitive element) of $\mathbb{F}^{*}_{q}$.
  \item $q-1=e_{1}e_{2}$.
  \item $H_{1}=\langle\theta^{e_{1}}\rangle$ and $H_{2}=\langle\theta^{e_{2}}\rangle$.
  \item $\nu(a)=\min\{ x\in\mathbb{N} \ | \ a=\theta^{xe_{1}} \}$ for $a\in H_{1}$.
\end{itemize}

\begin{lemma}\label{Th3} Suppose $\mathbf{GRS}_{t}(A,V,q)$ is self-dual for some $A=\{a_{1},a_{2},\cdots,a_{2t}\}\subseteq H_{1}$ and $V\in(\mathbb{F}_{q}^{*})^{2t}$.
Then there exists a $q$-ary MDS  self-dual code of length $2te_{1}$, if $e_{1},e_{2}$ and $\nu(a_{i})$ satisfy one of the following conditions.
\begin{itemize}
  \item[(1)] $e_{1}$ is odd.
  \item[(2)] $e_{1}$ and $e_{2}$ are even, and $\nu(a_{i})(1\leq i\leq 2t)$ have the same parity.
\end{itemize}
\end{lemma}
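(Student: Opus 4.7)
The plan is to mirror Lemma~\ref{Th1} with additive cosets replaced by multiplicative ones. I set
\[
B := \bigcup_{i=1}^{2t} a_i H_2,
\]
compute $\Delta_B(b)$ via Lemmas~\ref{22} and~\ref{YZF}, and then apply Lemma~\ref{fS}(1) to $B$. Pairwise disjointness of the cosets, which yields $|B|=2te_1$, amounts to $a_ia_j^{-1}\notin H_1\cap H_2$ for $i\neq j$, i.e.\ $\nu(a_i)\not\equiv\nu(a_j)\pmod{e_2/\gcd(e_1,e_2)}$; both hypotheses on $(e_1,e_2,\nu(a_i))$ ensure this.

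For $b\in a_kH_2$, Lemma~\ref{22} gives $\Delta_B(b)=\Delta_{a_kH_2}(b)\prod_{j\neq k}f_{a_jH_2}(b)$. Since $a_jH_2=\theta^{\nu(a_j)e_1}H_2$, Lemma~\ref{YZF} supplies $\Delta_{a_kH_2}(b)=e_1 b^{e_1-1}$ and $f_{a_jH_2}(x)=x^{e_1}-a_j^{e_1}$. Using $b^{e_1}=a_k^{e_1}$ on $a_kH_2$, this collapses to
\[
\Delta_B(b)=e_1\,b^{e_1-1}\,C_k,\qquad C_k:=\prod_{j\neq k}(a_k^{e_1}-a_j^{e_1}).
\]
In case~(1), $e_1-1$ is even so $\eta(b^{e_1-1})=1$, and $\eta(\Delta_B(b))=\eta(e_1)\eta(C_k)$ depends only on $k$. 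In case~(2), the evenness of $e_2$ forces $\eta(h)=1$ for every $h\in H_2$ and the evenness of $e_1$ forces $\eta(a_k)=1$, so $\eta(b)=1$ and we again get $\eta(\Delta_B(b))=\eta(e_1)\eta(C_k)$. Thus Lemma~\ref{fS}(1) applied to $B$ will deliver the desired MDS self-dual $\mathbf{GRS}_{te_1}(B,V',q)$ of length $2te_1$, provided one shows $\eta(C_k)$ is constant in $k$.

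The main obstacle is precisely this last reduction: given that $\eta(\Delta_A(a_k))$ is constant in $k$ (the content of $\mathbf{GRS}_t(A,V,q)$ being self-dual, by Lemma~\ref{fS}(1)), deduce the same for $\eta(C_k)$. I would use the factorization $x^{e_1}-y^{e_1}=\prod_{\zeta\in H_2}(x-\zeta y)$ to split $C_k=\Delta_A(a_k)\cdot P_k$ with $P_k=\prod_{\zeta\in H_2\setminus\{1\}}\prod_{j\neq k}(a_k-\zeta a_j)$, then exploit the involution $\zeta\mapsto\zeta^{-1}$ on $H_2\setminus\{1\}$ (a set of even cardinality in both cases) together with the symmetry $g(x,y)=g(y,x)$ of $g(x,y):=(x^{e_1}-y^{e_1})/(x-y)$, combined with the arithmetic hypotheses on $e_1,e_2,\nu(a_i)$, to force $\eta(P_k)$ to be independent of $k$. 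Combining with $\eta(\Delta_A(a_k))$ constant then yields $\eta(C_k)$ constant and finishes the proof.
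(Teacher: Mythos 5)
There is a genuine gap, and it originates in your choice of $B$. You take $B=\bigcup_{i}a_iH_2$, but the cosets $a_iH_2$ and $a_jH_2$ coincide exactly when $a_ia_j^{-1}\in H_2$, i.e.\ when $\frac{e_2}{\gcd(e_1,e_2)}\mid(\nu(a_i)-\nu(a_j))$, and \emph{neither} hypothesis rules this out. For instance, with $q=37$, $e_1=3$, $e_2=12$ (hypothesis (1) holds), the elements $a_1=\theta^{3}$ and $a_2=\theta^{15}$ of $H_1$ satisfy $a_1a_2^{-1}=\theta^{-12}\in H_2$, so $a_1H_2=a_2H_2$ and your $B$ has only $e_1$ elements instead of $2e_1$; similarly $e_1=e_2$ even forces $H_1=H_2$ and collapses all the cosets. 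So your assertion that ``both hypotheses ensure'' disjointness is false, and the construction does not produce a set of size $2te_1$. The paper avoids this entirely by taking $B=\bigcup_i\theta^{\nu(a_i)}H_2$ rather than $\bigcup_i a_iH_2=\bigcup_i\theta^{\nu(a_i)e_1}H_2$: since $0\leq\nu(a_i)<e_2=|H_1|$ and the $\nu(a_i)$ are distinct, those cosets are automatically pairwise distinct.

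The second problem is the unresolved constancy of $\eta(C_k)$, which you correctly identify as ``the main obstacle'' but only sketch a hope for (an involution $\zeta\mapsto\zeta^{-1}$ plus a symmetry of $g(x,y)$); as written this is not a proof, and it is the entire content of the lemma. This difficulty is an artifact of the same wrong choice of coset representatives. With the paper's cosets, Lemma \ref{YZF} gives $f_{\theta^{\nu(a_j)}H_2}(x)=x^{e_1}-\theta^{\nu(a_j)e_1}=x^{e_1}-a_j$, and for $b=\theta^{\nu(a_k)+e_2u}\in\theta^{\nu(a_k)}H_2$ one has $b^{e_1}=\theta^{\nu(a_k)e_1}=a_k$, so the product over $j\neq k$ is \emph{literally} $\Delta_A(a_k)$ — no comparison between $\prod_{j\neq k}(a_k^{e_1}-a_j^{e_1})$ and $\Delta_A(a_k)$ is ever needed. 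The remaining factor is $e_1b^{e_1-1}=e_1\theta^{\nu(a_k)(e_1-1)}\theta^{-e_2u}$, and the two hypotheses are precisely what make its quadratic character independent of $k$ and $u$ (in case (1) both exponents are even times odd/even combinations giving $\eta=1$; in case (2) the term $\eta(\theta^{\nu(a_k)(e_1-1)})=(-1)^{\nu(a_k)}$ is constant because the $\nu(a_i)$ share a parity). You should redo the argument with $B=\bigcup_i\theta^{\nu(a_i)}H_2$; as it stands, the proposal neither produces a set of the right cardinality nor completes the character computation.
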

\begin{proof}
 Label the elements of $B:=\bigcup\limits_{i=1}^{2t}(\theta^{\nu(a_{i})}H_{2})$ by $b_{1},b_{2},\cdots,b_{n}$. By Lemma \ref{YZF}, $$f_{\theta^{\nu(a_{i})}H_{2}}(x)=x^{e_{1}}-\theta^{\nu(a_{i})e_{1}} \quad \mbox{and} \quad \Delta_{\theta^{\nu(a_{i})}H_{2}}(x)=f^{'}_{\theta^{\nu(a_{i})}H_{2}}(x)=e_{1}x^{e_{1}-1}.$$
 If $b_{i}\in \theta^{\nu(a_{k})}H_{2}$ for some $k$, then there exists an integer $u\in[0,e_{1}-1]$ such that $b_{i}=\theta^{\nu(a_{k})+e_{2}u}$. Thus
\begin{equation}\nonumber
\begin{aligned}
\Delta_{B}(b_{i})&=\Delta_{\theta^{\nu(a_{k})}H_{2}}(b_{i})\prod_{j=1,j\neq k}^{2t}f_{\theta^{\nu(a_{j})}H_{2}}(b_{i})&\\
&=e_{1}\theta^{\nu(a_{k})(e_{1}-1)}\theta^{-e_{2}u}\prod_{j=1,j\neq k}^{2t}(\theta^{(\nu(a_{k})+e_{2}u)e_{1}}-\theta^{\nu(a_{j})e_{1}})&\\
&=e_{1}\theta^{\nu(a_{k})(e_{1}-1)}\theta^{-e_{2}u}\prod_{j=1,j\neq k}^{2t}(a_{k}-a_{j})&\\
&=e_{1}\theta^{\nu(a_{k})(e_{1}-1)}\theta^{-e_{2}u}\Delta_{A}(a_{k}).&
\end{aligned}
\end{equation}
(1) The fact $e_{1}$ is odd yields $e_{2}$ is even. Note that all of $\eta(\Delta_{A}(a_{i}))$ are equal, which implies all of $\eta(\Delta_{B}(b_{i}))(1\leq i \leq2te_{1})$ take the same value. By Lemma \ref{fS}, there exists a $q$-ary MDS  self-dual code of length $2te_{1}$.\\
(2) Since $e_{1},e_{2}$ are even, then
$$\eta(\Delta_{A}(a_{k})e_{1}\theta^{\nu(a_{k})(e_{1}-1)}\theta^{-e_{2}u})=\eta(\Delta_{A}(a_{k})e_{1}\theta^{\nu(a_{k})}).$$
Note that all of $\nu(a_{i})(1\leq i\leq 2t)$ have the same parity,
which implies all of $\eta(\Delta_{B}(b_{i}))(1\leq i \leq2te_{1})$ are equal. By Lemma \ref{fS}, there exists a $q$-ary MDS  self-dual code of length $2te_{1}$.
\end{proof}

\begin{remark} Now we give a short proof of (\cite{Yan}, Theorem 1(i)). Let $q\equiv1 \ (\mbox{mod} \ 4)$ be an odd prime power and let $n\mid q-1$ be an even positive integer. Then there exists a $q$-ary MDS  self-dual code of length $n<q-1$.
\end{remark}
\begin{proof}
Keep the same notations as Lemma \ref{Th3}. Let $q-1=2^{k}r$ with $\gcd(2,r)=1$. For $n=2^{k'}r^{'}$ with $\gcd(2,r^{'})=1$, we have $k^{'}\leq k$ and $r^{'}\mid r$.
\begin{itemize}
  \item If $k^{'}<k$, then we take $e_{1}=r^{'}$ and $A=\langle\theta^{2^{k-k^{'}}r}\rangle\subseteq\langle\theta^{e_{1}}\rangle$.
  \item If $k^{'}=k$, then $\frac{r}{r^{'}}\geq3$. We take $e_{1}=2^{k-1}r^{'}$ and $A=\{\theta^{e_{1}},\theta^{3e_{1}}\}\subseteq\langle\theta^{e_{1}}\rangle$.
\end{itemize}
\end{proof}

\begin{theorem}\label{GRS2} Let $q=p^{m}$ be an odd prime power and $q-1=e_{1}e_{2}$. Suppose $p\equiv1,3 \ (\mbox{mod} \ 8)$. For $e_{1}$ odd and $e_{2}\geq4$, there exists a $q$-ary MDS  self-dual code of length $4e_{1}$.
\end{theorem}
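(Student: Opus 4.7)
The plan is to produce the code via Lemma \ref{Th3}(1): since $e_1$ is odd, it is enough to exhibit a self-dual $\mathbf{GRS}_2(A,V,q)$ whose evaluation set $A$ lies in $H_1$ and has size four. By Lemma \ref{fS}(1) this reduces to choosing such $A$ for which $\eta(\Delta_A(a))$ takes a single value as $a$ ranges over $A$.

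The choice I would make is
\[
A = \{1,\, -1,\, a,\, a^{-1}\}
\]
for any $a \in H_1 \setminus \{1,-1\}$. Since $e_1 e_2 = q-1$ is even and $e_1$ is odd, $e_2$ must be even, so $-1 \in H_1$; and $|H_1| = e_2 \geq 4$ guarantees such an $a$ exists, while $a \neq a^{-1}$ because $a \neq \pm 1$. Expanding the Vandermonde products gives
\[
\Delta_A(1) = -\frac{2(1-a)^2}{a},\quad \Delta_A(-1) = -\frac{2(1+a)^2}{a},\quad \Delta_A(a) = \frac{(a^2-1)^2}{a},\quad \Delta_A(a^{-1}) = \frac{(a^2-1)^2}{a^3}.
\]
After absorbing the evident square factors one finds $\eta(\Delta_A(\pm 1)) = \eta(-2)\eta(a)$ and $\eta(\Delta_A(a^{\pm 1})) = \eta(a)$.

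The hypothesis $p \equiv 1,3 \pmod{8}$ is precisely $\eta_p(-2) = 1$, and because squares in $\mathbb{F}_p$ remain squares in $\mathbb{F}_q$, it follows that $\eta(-2) = 1$ in $\mathbb{F}_q$. All four values of $\eta(\Delta_A(\cdot))$ therefore coincide with $\eta(a)$, and Lemma \ref{fS}(1) supplies a $V \in (\mathbb{F}_q^*)^4$ making $\mathbf{GRS}_2(A,V,q)$ self-dual. Lemma \ref{Th3}(1), applied with this $(A,V)$ and $t=2$, then yields the desired $q$-ary MDS self-dual code of length $4e_1$.

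The main obstacle is selecting the right $A$. Symmetric choices that come to mind first, such as $\{1,-1,a,-a\}$ or $\{a,-a,b,-b\}$, produce a residual discrepancy of $\eta(-1)$ between the different values $\eta(\Delta_A(\cdot))$, so they can only be forced equal when $q \equiv 1 \pmod 4$ — strictly stronger than the hypothesis, since $p \equiv 3 \pmod{8}$ with $m$ odd gives $q \equiv 3 \pmod 4$. Replacing negation by multiplicative inversion in the last coordinate trades the unwanted $\eta(-1)$ for $\eta(-2)$, which is precisely the quantity controlled by $p \equiv 1,3 \pmod{8}$.
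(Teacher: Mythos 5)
Your proposal is correct and follows essentially the same route as the paper: the paper also applies Lemma \ref{Th3}(1) with the evaluation set $\{1,-1,\theta^{e_{1}},\theta^{-e_{1}}\}$, which is the special case $a=\theta^{e_{1}}$ of your $\{1,-1,a,a^{-1}\}$, and likewise reduces everything to $\eta(-2)=1$ from $p\equiv1,3\ (\mathrm{mod}\ 8)$. Your computations of the $\Delta_A$ values and the checks that $-1\in H_{1}$ and that a suitable $a$ exists (using $e_{2}\geq4$) are all accurate.
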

\begin{proof}
Let $\theta$ be a generator of $\mathbb{F}^{*}_{q}$. Since $p\equiv1,3 \ (\mbox{mod} \ 8)$,  $\eta(-2)=1$. We choose $A=\{1,-1,\theta^{e_{1}},\theta^{-e_{1}}\}\subseteq \langle\theta^{e_{1}}\rangle$. Then
\begin{equation}\nonumber
\begin{aligned}
&\eta\left(\Delta_{A}(1)\right)=\eta\left(2\cdot(1-\theta^{e_{1}})\cdot(1-\theta^{-e_{1}})\right)=\eta(-2\theta),&\\
&\eta(\Delta_{A}(-1))=\eta\left(-2\cdot(-1-\theta^{e_{1}})\cdot(-1-\theta^{-e_{1}})\right)=\eta(-2\theta),&\\
&\eta\left(\Delta_{A}\left(\theta^{e_{1}}\right)\right)=\eta\left((\theta^{e_{1}}-1)\cdot(\theta^{e_{1}}+1)
\cdot(\theta^{e_{1}}-\theta^{-e_{1}})\right)=\eta(\theta),&\\
&\eta\left(\Delta_{A}\left(\theta^{-e_{1}}\right)\right)=\eta\left((\theta^{-e_{1}}-1)\cdot(\theta^{-e_{1}}+1)
\cdot(\theta^{-e_{1}}-\theta^{e_{1}})\right)=\eta(\theta).&
\end{aligned}
\end{equation}
Thus, by Lemma \ref{fS}, there exists $V=(v_{1},v_{2},v_{3},v_{4})\in(\mathbb{F}_{q}^{*})^{4}$  such that the code $\mathbf{GRS}_{2}(A,V,q)$ is self-dual.
By Lemma \ref{Th3}, there exists a $q$-ary MDS  self-dual code of length $4e_{1}$.
\end{proof}

\begin{lemma}\label{111} Let $A=\{a_{1},a_{2},\cdots,a_{t}\}\subseteq H_{1}$.
\begin{itemize}
  \item[(1)] If $t$ and $e_{1}$ are both odd and $\eta(\prod\limits_{i=1}^{t}a_{i})=\eta(-e_{1}\Delta_{A}(a_{i})a_{i})$ for $1\leq i \leq t$, then there exists a $q$-ary MDS  self-dual code of length $te_{1}+1$.
  \item[(2)] If one of $t,e_{1}$ is even, $\eta(e_{1})=\eta(-\Delta_{A}(a_{i})a_{i})(1\leq i \leq t)$ and $\eta((-1)^{t+1}\prod\limits_{i=1}^{t}a_{i})=1$, then there exists a $q$-ary MDS  self-dual code of length $te_{1}+2$.
\end{itemize}
\end{lemma}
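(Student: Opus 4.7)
The plan is to mimic Lemma \ref{Th3} but adjoin the point $0$ to the union of cosets. Set $B = \bigcup_{i=1}^{t}\theta^{\nu(a_i)}H_2 \subseteq \mathbb{F}_q^{*}$, of size $te_1$, and let $\tilde B = B \cup \{0\}$, so $|\tilde B| = te_1+1$. In case (1), both $t$ and $e_1$ are odd, so $|\tilde B|$ is even, and I aim to produce a self-dual GRS code of length $te_1+1$ on $\tilde B$ via Lemma \ref{fS}(1). In case (2), one of $t,e_1$ is even, so $|\tilde B|$ is odd, and I aim to produce a self-dual EGRS code of length $te_1+2$ on $\tilde B$ via Lemma \ref{fS}(2).

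The core computation is $\Delta_{\tilde B}$ on the two types of points of $\tilde B$. For $b \in \theta^{\nu(a_k)}H_2$, Lemma \ref{22}(2) applied to the partition $\tilde B = B \cup \{0\}$, together with the calculation in the proof of Lemma \ref{Th3} (and the identity $b^{e_1}=a_k$), gives
$$\Delta_{\tilde B}(b) = \Delta_B(b)\cdot b = e_1 b^{e_1-1}\Delta_A(a_k)\cdot b = e_1 a_k \Delta_A(a_k).$$
For $b=0$, Lemma \ref{22}(2) gives $\Delta_{\tilde B}(0) = f_B(0) = (-1)^{te_1}\prod_{b\in B}b$. Each coset contributes $a_k\prod_{h\in H_2}h$, where $\prod_{h\in H_2}h = \theta^{e_2 e_1(e_1-1)/2}$ evaluates to $1$ when $e_1$ is odd and to $-1$ when $e_1$ is even, whence $\prod_{b\in B}b = \big(\prod_{i=1}^{t} a_i\big)\big(\prod_{h\in H_2}h\big)^{t}$.

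In case (1), both $te_1$ and $e_1$ are odd, so $\Delta_{\tilde B}(0)=-\prod_i a_i$. The condition in Lemma \ref{fS}(1) that $\eta(\Delta_{\tilde B}(\cdot))$ be constant on $\tilde B$ reduces to $\eta(e_1 a_k\Delta_A(a_k)) = \eta(-\prod_i a_i)$ for each $k$, which after multiplying by $\eta(-1)$ is exactly the hypothesis $\eta(\prod_i a_i) = \eta(-e_1\Delta_A(a_k)a_k)$. In case (2), $te_1$ is even so $\Delta_{\tilde B}(0) = \prod_{b\in B}b$, and Lemma \ref{fS}(2) requires $\eta(-\Delta_{\tilde B}(b))=1$ on all of $\tilde B$; for $b\in B$ this reads $\eta(-e_1 a_k\Delta_A(a_k))=1$, equivalent to $\eta(e_1)=\eta(-\Delta_A(a_k)a_k)$. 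For the remaining point $b=0$, a short case split over the three parity subcases ($e_1$ odd with $t$ even; $e_1$ even with $t$ odd; both even) shows that in every subcase $\eta(-\Delta_{\tilde B}(0)) = \eta((-1)^{t+1}\prod_i a_i)$, so the condition collapses to $\eta((-1)^{t+1}\prod_i a_i)=1$. The main obstacle is precisely this sign bookkeeping in $\prod_{b\in B}b$ and $\Delta_{\tilde B}(0)$: one must verify that the three subcases merge into the single clean condition stated in the lemma, after which Lemma \ref{fS} immediately delivers the codes.
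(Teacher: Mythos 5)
Your proposal is correct and follows essentially the same route as the paper: take $B=\bigcup_i\theta^{\nu(a_i)}H_2\cup\{0\}$, compute $\Delta_B(b)=e_1a_k\Delta_A(a_k)$ on the nonzero points via Lemmas \ref{22} and \ref{YZF} and $\Delta_B(0)=(-1)^{t}\prod_i a_i$ at the origin, then apply Lemma \ref{fS}(1) or (2) according to the parity of $te_1+1$. Your sign bookkeeping for $\Delta_{\tilde B}(0)$ is carried out more explicitly than in the paper (which simply records $\eta(\Delta_B(0))=\eta((-1)^t\prod_i a_i)$), but the two arguments are otherwise identical.
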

\begin{proof}
Label the elements of $B:=\bigcup\limits_{a_{i}\in A}(\theta^{\nu(a_{i})}H_{2})\cup \{0\}$ by $b_{1},b_{2},\cdots,b_{n}$. By Lemma \ref{YZF}, $$f_{\theta^{\nu(a_{i})}H_{2}}(x)=x^{e_{1}}-\theta^{\nu(a_{i})e_{1}} \quad \mbox{and} \quad \Delta_{\theta^{\nu(a_{i})}H_{2}}(x)=f^{'}_{\theta^{\nu(a_{i})}H_{2}}(x)=e_{1}x^{e_{1}-1}.$$
If $0\neq b_{i}\in \theta^{\nu(a_{k})}H_{2}$ for some $k$, then there exists an integer $u\in[0,e_{1}-1]$ such that $b_{i}=\theta^{\nu(a_{k})+e_{2}u}$. Thus
\begin{equation}\nonumber
\begin{aligned}
\Delta_{B}(b_{i})&=\Delta_{\theta^{\nu(a_{k})}H_{2}}(b_{i})b_{i}\prod_{j=1,j\neq k}^{t}f_{\theta^{\nu(a_{j})}H_{2}}(b_{i})&\\
&=e_{1}\theta^{\nu(a_{k})e_{1}}\prod_{j=1,j\neq k}^{t}(\theta^{(\nu(a_{k})+e_{2}u)e_{1}}-\theta^{\nu(a_{j})e_{1}})&\\
&=e_{1}a_{k}\prod_{j=1,j\neq k}^{t}(a_{k}-a_{j})&\\
&=\Delta_{A}(a_{k})a_{k}e_{1}&
\end{aligned}
\end{equation}
and $\eta(\Delta_{B}(0))=\eta((-1)^{t}\prod\limits_{i=1}^{t}a_{i})$.\\
(1) For $t$ odd and  $\eta(\prod\limits_{i=1}^{t}a_{i})=\eta(-e_{1}\Delta_{A}(a_{i})a_{i})$ for $1\leq i \leq t$, we have that
all of $\eta(-\Delta_{B}(b_{i}))(1\leq i \leq te_{1}+1)$ are equal.
By Lemma \ref{fS}, there exists a $q$-ary MDS  self-dual code of length $te_{1}+1$.\\
(2) If one of $t,e_{1}$ is even, $\eta(e_{1})=\eta(-\Delta_{A}(a_{i})a_{i})(1\leq i \leq t)$ and $\eta((-1)^{t+1}\prod\limits_{i=1}^{t}a_{i})=1$, then $\eta(-\Delta_{B}(b_{i}))=1(1\leq i \leq te_{1}+1)$.
By Lemma \ref{fS}, there exists a $q$-ary MDS  self-dual code of length $te_{1}+2$.
\end{proof}

\begin{remark} Let $A=\{1\}$.
\begin{itemize}
  \item[(1)] If $e_{1}$ is odd and $\eta(-e_{1})=1$, then there exists a $q$-ary MDS  self-dual code of length $e_{1}+1$ (see \cite{Yan}, Theorem 1(ii)).
  \item[(2)] If $e_{1}$ is even and $\eta(-e_{1})=1$, then there exists a $q$-ary MDS  self-dual code of length $e_{1}+2$ (see \cite{Yan}, Theorem 1(iii)).
\end{itemize}
\end{remark}

\begin{theorem}\label{Th4} Let $q=p^{m}$ be an odd prime power and $s\mid m$. Then there exists a $q$-ary MDS self-dual code of length $n$ as follows.
\begin{itemize}
  \item[(1)] $n=(t+1)p^{sl}$ with $0\leq l < \frac{m}{s}$, if $t$ is odd, $t\mid (p^{s} -1)$ and $\eta(-t)= 1$.
  \item[(2)] $n=(t+1)p^{sl}+1$ with $0\leq l < \frac{m}{s}$, if $t$ is even, $t\mid (p^{s}-1)$ and $\eta(t)=\eta(-1)=1$.
\end{itemize}
\end{theorem}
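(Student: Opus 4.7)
The plan is to apply Lemma \ref{Th1} in case (1) and Lemma \ref{Th2} in case (2), using the same carefully chosen subset $A\subseteq\mathbb{F}_{p^{s}}$ of size $t+1$. Since $t\mid (p^{s}-1)$, there is a primitive $t$-th root of unity $\xi\in\mathbb{F}_{p^{s}}^{*}$, and I take
$$A=\{0\}\cup\{\xi^{j}:0\leq j\leq t-1\}=\{0,1,\xi,\xi^{2},\cdots,\xi^{t-1}\}\subseteq\mathbb{F}_{p^{s}}.$$
With this choice,
$$f_{A}(x)=x\prod_{j=0}^{t-1}(x-\xi^{j})=x(x^{t}-1)=x^{t+1}-x,\qquad f_{A}'(x)=(t+1)x^{t}-1.$$
By Lemma \ref{22}(1), the relevant discriminants are $\Delta_{A}(0)=f_{A}'(0)=-1$ and, using $\xi^{jt}=1$, $\Delta_{A}(\xi^{j})=f_{A}'(\xi^{j})=(t+1)-1=t$ for each $0\leq j\leq t-1$.

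For part (1), $t$ odd makes $t+1$ even, so Lemma \ref{fS}(1) applies: a self-dual $\mathbf{GRS}_{(t+1)/2}(A,V,q)$ exists exactly when all $\eta(\Delta_{A}(a))$ coincide. From the computation above, this condition reduces to $\eta(-1)=\eta(t)$, i.e.\ $\eta(-t)=1$, which is assumed. I then feed this self-dual GRS code (with $A\subseteq\mathbb{F}_{p^{s}}$ and $2t$ of Lemma \ref{Th1} replaced by $t+1$) into Lemma \ref{Th1} to produce the $q$-ary MDS self-dual code of length $(t+1)p^{sl}$.

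For part (2), $t$ even makes $t+1$ odd, and Lemma \ref{fS}(2) requires $\eta(-\Delta_{A}(a))=1$ for every $a\in A$. At $a=0$ this gives $\eta(1)=1$ (automatic); at $a=\xi^{j}$ it gives $\eta(-t)=1$, which is implied by $\eta(t)=\eta(-1)=1$. Hence an MDS self-dual $\mathbf{EGRS}_{(t+2)/2}(A,V,q)$ of length $t+2$ exists. Finally, the hypothesis $\eta(-1)=1$ is equivalent to $q\equiv1\,(\mathrm{mod}\,4)$, which is precisely the side condition that lets Lemma \ref{Th2} (with $2t+1$ there replaced by $t+1$) lift this short code to a $q$-ary MDS self-dual code of length $(t+1)p^{sl}+1$.

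There is no real obstacle: the discriminant calculations collapse immediately once one notices the closed form $f_{A}(x)=x^{t+1}-x$, and the rest is bookkeeping between the conditions $\eta(\pm t)=1$ of the statement and the hypotheses of Lemmas \ref{fS}, \ref{Th1}, \ref{Th2}. The only point requiring a moment's attention is verifying that $\eta(-1)=1$ is indeed available in case (2) so that Lemma \ref{Th2} can be invoked uniformly in $l$.
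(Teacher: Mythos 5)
Your proof is correct and follows essentially the same route as the paper: both arguments construct the short self-dual (E)GRS code on the set consisting of $0$ together with the $t$-th roots of unity in $\mathbb{F}_{p^{s}}$, and then lift it to length $(t+1)p^{sl}$ (resp.\ $(t+1)p^{sl}+1$) via Lemma \ref{Th1} (resp.\ Lemma \ref{Th2}, using $\eta(-1)=1 \Leftrightarrow q\equiv1\,(\mathrm{mod}\,4)$ exactly as the paper does). The only difference is presentational: the paper certifies the base code by invoking Lemma \ref{111} with $A=\{1\}$ and $e_{1}=t$, whereas you verify the hypotheses of Lemma \ref{fS} directly from $f_{A}(x)=x^{t+1}-x$, obtaining $\Delta_{A}(0)=-1$ and $\Delta_{A}(\xi^{j})=t$ --- both computations reduce to the same conditions $\eta(-t)=1$ and $\eta(t)=\eta(-1)=1$.
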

\begin{proof} Let $\theta$ be a generator of $\mathbb{F}^{*}_{q}$. Then $\mathbb{F}^{*}_{p^{s}}=\left\langle\theta^{\frac{q-1}{p^{s}-1}}\right\rangle$. We take $e_{1}=t$ and $A=\{1\}$.
\begin{itemize}
  \item[(1)] Since $t$ is odd and $\eta(-t)= 1$, by Lemma \ref{111}, $\mathbf{GRS}_{\frac{t+1}{2}}(B,V,q)$ is self-dual for some $V\in(\mathbb{F}_{q}^{*})^{t+1}$ where $B=\left\langle\theta^{\frac{q-1}{t}}\right\rangle\cup\{0\}$. Since $t\mid (p^{s}-1)$, we have $B\subseteq\mathbb{F}_{p^{s}}$. By Lemma \ref{Th1}, there exists a $q$-ary MDS  self-dual code of length $(t+1)p^{sl}$ with $0\leq l < \frac{m}{s}$.
  \item[(2)] Since $t$ is even and $\eta(t)=\eta(-1)= 1$, by Lemma \ref{111}, $\mathbf{EGRS}_{\frac{t+2}{2}}(B,V,q)$ is self-dual for some $V\in(\mathbb{F}_{q}^{*})^{t+2}$ where $B=\left\langle\theta^{\frac{q-1}{t}}\right\rangle\cup\{0\}$. Note that $t\mid (p^{s} -1)$. Thus, $B\subseteq\mathbb{F}_{p^{s}}$. Since $\eta(-1)=1$ yields $q\equiv1\,(\mbox{mod}\,4)$, by Lemma \ref{Th2}, there exists a $q$-ary MDS  self-dual code of length $(t+1)p^{sl}+1$ with $0\leq l < \frac{m}{s}$,
\end{itemize}
\end{proof}

\section{Conclusion}

The criterions of MDS  self-dual codes is given in \cite{JX2,Yan}.  A. Zhang and K. Feng \cite{ZF} considered constructions of MDS  self-dual codes from small filed. In this paper, the constructions of MDS  self-dual codes from short length is considered. The proof of the results were concise by using the notations of \cite{ZF}. Furthermore, some known results can be considered as special cases in our results.  The exact constructions of MDS  self-dual codes with short length (especially $n=3,4,5,6$) are given so that we obtained new  MDS  self-dual codes. Note that there are only a few known results about $q\equiv3 \ (\mbox{mod} \ 4)$. For $q\equiv3 \ (\mbox{mod} \ 8)$, we given MDS  self-dual codes of length $3p^{l}+1$ and $4e_{1}$ in Theorem \ref{EGRS} and \ref{GRS2}. Finally, combining  with Lemmas \ref{Th1}-\ref{111} and known results, we can get new $q$-ary MDS self-dual code.


\end{document}